\newtheorem{theorem}{Theorem}
\newtheorem{lemma}{Lemma}
\newtheorem{corollary}{Corollary}
\newcommand{\ket}[1]{\ensuremath{\left| #1 \right\rangle}}
\newcommand{\br}[1]{\ensuremath{\left\langle #1 \right.}}
\newcommand{\bra}[1]{\ensuremath{\left. \br{#1} \right|}}
\newcommand{\kb}[2]{\ket{{#1}}\bra{{#2}}}
\newcommand{\proj}[1]{\kb{{#1}}{{#1}}}
\newcommand{\tr}{{\rm Tr}}
\definecolor{red}{rgb}{0.9,0,0}
\definecolor{green}{rgb}{0,0.8,0}
\definecolor{blue}{rgb}{0,0,0.8}
\definecolor{cautionred}{rgb}{1.0,0,0}
\begin{document}
\title{The thermodynamic cost of quantum operations}
\date{\today}
\author{D.~J.~Bedingham}
\email{daniel.bedingham@philosophy.ox.ac.uk}
\author{O.~J.~E.~Maroney}
\email{owen.maroney@philosophy.ox.ac.uk}
\affiliation{Faculty of Philosophy, University of Oxford, OX2 6GG, United Kingdom.}
\begin{abstract}
The amount of heat generated by computers is rapidly becoming one of the main problems for developing new generations of information technology. The thermodynamics of computation sets the ultimate physical bounds on heat generation. A lower bound is set by the Landauer Limit, at which computation becomes thermodynamically reversible. For classical computation there is no physical principle which prevents this limit being reached, and approaches to it are already being experimentally tested. In this paper we show that for quantum computation there is an unavoidable excess heat generation that renders it inherently thermodynamically irreversible.  The Landauer Limit cannot, in general, be reached by quantum computers.  We show the existence of a lower bound to the heat generated by quantum computing that exceeds that given by the Landauer Limit, give the special conditions where this excess cost may be avoided, and show how classical computing falls within these special conditions.
\end{abstract}

\pacs{03.67.-a, 05.70.-a, 89.70.Cf}
\maketitle
\section{Introduction}

Information processing does not come for free.  Physical systems are needed to store, transmit and process information, and these come with physical resource costs, in time, space and energy.  The heat generated by information processing is becoming one of the most significant of these costs. Modern integrated circuits have power densities of the order $100 {\rm W/cm}^{2}$ with operating temperatures near to their upper limit \cite{VASS}. Thermal management is already the main constraint on the performance of modern electronics and advances in technology will increasingly need to test the fundamental limits of energy consumption.

For classical information processing, Landauer's Principle \cite{LAN,BEN1,LR2003,Maroney2009} relates the change in information from a computation to a minimum thermodynamic cost in the form of heat generated in the environment.  The information content of a source of signals is measured by the Shannon measure $H=-\sum_n p_n \log_2 p_n$, where $p_n$ is the probability of signal $n$ occurring.  Landauer's Principle states that the mean heat generated by an information processing device is given by
\begin{align}
\Delta E_{\text{classical}} \geq  - k T \ln 2 \Delta H,
\label{clp}
\end{align}
where $\Delta H$ is the change in the Shannon information.

The limit cost, which corresponds to the condition of thermodynamic reversibility \cite{OM1,SAG1}, is defined purely in terms of the information processing operation itself: every physical system which performs the task must pay at least this cost \cite{Pie00,OM1,Turgut2006}.  Most importantly, it is a tight bound: while there are practical barriers to reaching the limit (finite size and time effects, single shot costs, etc.~\cite{Dahlstein2013,RW2014}) there is no physical principle that prevents their effects becoming arbitrarily small, with experimental tests increasingly pushing at this boundary \cite{BER,JUN,MARTINI}. 
This tight relationship between information theory and thermodynamics makes Landauer's Principle the basis of our understanding of the thermodynamic cost of information processing.

However, both the theoretical studies and experimental tests of the Landauer Limit have almost exclusively considered classical information processing. Quantum computing radically alters the nature of information, leading to many attempts to extend Landauer's Principle to quantum theory \cite{VV,AND,GOO,AT,SAG2,KIM,SAG3,GOO2}.

Quantum information processing involves a quantum system ${\rm S}$ with states $\rho^n_{\rm S}$ as signal states, undergoing a quantum operation $Q$, defined by its effect upon the signal states $\rho^n_{\rm S} \rightarrow \rho^{n\prime}_{\rm S} = \mathcal{Q}(\rho^n_{\rm S})$. Given that the input $\rho^n_{\rm S}$ appears with probability $p_n$, the average input is  $\rho_{\rm S}=\sum_n p_n \rho^n_{\rm S}$, and the average output is $\rho_{\rm S}'=\sum_n p_n \rho^{n\prime}_{\rm S}$.

The natural quantum generalisation of Shannon information to quantum theory is the Schumacher information measure $S(\rho_{\rm S})=-\tr [ \rho_{\rm S}\log_2 \rho_{\rm S} ]$ \cite{Sch95}, which agrees with the Shannon information when the signal states are orthogonal.  This suggests that Equation~(\ref{clp}) should be replaced by
\begin{align}
\Delta E_{Q} \geq  k T \ln 2 \left[S(\rho_{\rm S})- S(\rho_{\rm S}')\right].
\label{bound}
\end{align}
While it is well known that  Equation~(\ref{bound}) must hold, and that reaching the equality is required for thermodynamic reversibility \cite{PAV,Maroney2007a}, we show that it fails as a tight bound for quantum information processing.  There is an unavoidable excess heat generation in quantum computation that is not present in the classical case.

\subsection*{Summary of the theorem}

For any given quantum operation $Q$ there is non negative quantity $\epsilon_Q $, defined independently of how the quantum operation is physically implemented, such that the heat generated by the operation is bounded by:
\begin{align}
\Delta E_Q \geq  kT \ln 2\left[ S(\rho_{\rm S}) - S(\rho'_{\rm S})\right] + \epsilon_Q.
\nonumber
\end{align}
In general for quantum operations $\epsilon_Q >0$.  Barring exceptional symmetric cases, $\epsilon_Q =0$ if, and only if, all the output signal states $\mathcal{Q}(\rho^n_{\rm S})$ share a common diagonalised basis, and there exists a common left stochastic map from diagonal elements of the input states (in the diagonal basis of $\rho_{\rm S}$) to eigenvalues of the output states.  If these conditions do not hold, there is an excess, thermodynamically irreversible cost to any physical process that performs the quantum operation.  

\medskip

This paper will define quantum operations, characterise the thermodynamics of implementing quantum operations, state the theorem, and then give examples of operations which meet, and operations which do not meet, the conditions for $\epsilon_Q=0$.

\section{Quantum Operations}
While classical information processing is built up from logical operations such as AND, OR and NOT gates, the fundamental element in quantum information processing is the quantum operation \cite{NC}.  This acts upon one of a number of possible input states, each represented by a different quantum density matrix, $\{\rho^n_{\rm S}\}$, and maps each one to a specific output state $\rho^n_{\rm S} \rightarrow {\rho^{n\prime}_{\rm S}}$.

Not all maps of the form $\rho^n_{\rm S} \rightarrow {\rho^{n\prime}_{\rm S}}$ are physically possible.  A quantum operation, $Q$, must be a completely positive, trace preserving linear map: $\mathcal{Q}(\rho^n_{\rm S})=\sum_k Q_k \rho^n_{\rm S} Q^\dag_k$, with $\sum_k Q_k Q_k^\dag=\mathbb{1}_{\rm S}$.    When it is not a pure unitary rotation, a quantum operation requires the use of an auxiliary system ${\rm A}$, initially in a standard state, $\rho_{\rm A}$.  A joint unitary $V$ acting on $\rho^n_{\rm S}\otimes \rho_{\rm A}\rightarrow \rho_{{\rm SA}}^{n\prime}=V\rho^n_{\rm S}\otimes \rho_{\rm A} V^{\dagger}$ results in an entangled state of system and auxiliary (see Figure \ref{F0}). The reduced state of the system, ${\rho^{n\prime}_{\rm S}} = \tr_{\rm A} [V\rho^n_{\rm S}\otimes \rho_{\rm A} V^{\dagger}]$, is the output state (see \cite{NC} Chapter 8, for a textbook presentation).

\begin{figure}
        \begin{center}
        	\includegraphics[width=0.3\textwidth]{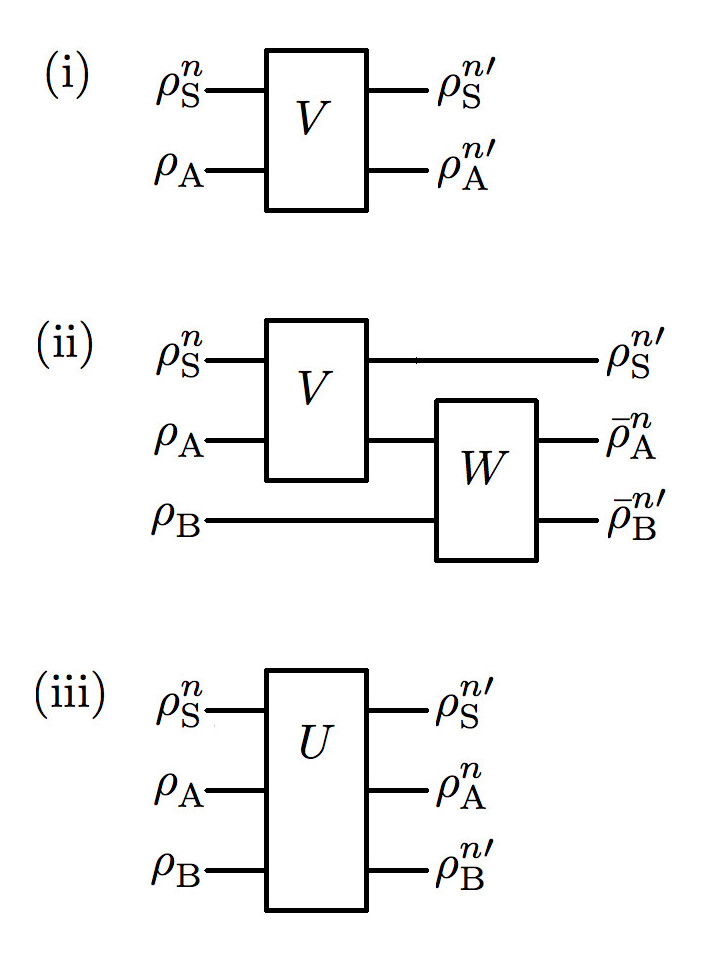}
        \end{center}
\caption{(i) The basic operation we consider involves a unitary $V$ to convert an input system state $\rho^{n}_{\rm S}$  into an output state $\rho^{n\prime}_{\rm S}$. An auxiliary input $\rho_{\rm A}$ typically becomes entangled with the system and is discarded at the end of the operation. In order to understand the energy cost of the operation we should reset the auxiliary to its standard state $\rho_{\rm A}$ in preparation for further uses. This is done via interaction with a heat bath which starts in the canonical state $\rho_{\rm B}$. (ii) An energetically suboptimal way to do this ignores any correlations between ${\rm S}$ and ${\rm A}$ and directly resets the auxiliary using the unitary $W$ which acts on the joint auxiliary-heat bath state. The output auxiliary state $\bar{\rho}_{\rm A}^n = \tr_{\rm B}[W\rho^{n\prime}_{\rm A}\otimes\rho_{\rm B}W^{\dagger}]$ should be such that ${\rho}_{\rm A} = \sum_n p_n \bar{\rho}_{\rm A}^n$ where $p_n$ is the probability of input $n$; the auxiliary is reset on average. (iii) The optimal method typically cannot be decomposed in this way and requires a unitary $U$ defined to act on system, auxiliary, and heat bath. $U$ should result in the same output state $\rho^{n\prime}_{\rm S}$ and reset the auxiliary: the output auxiliary state ${\rho}_{\rm A}^n$ should be such that $\rho_{\rm A} = \sum_n p_n {\rho}_{\rm A}^n$. For given average input $\rho_{\rm S}$ and output $\rho'_{\rm S}$, the average energy change which occurs in the heat bath will be lower bounded.
}
\label{F0}
\end{figure}

\section{Thermodynamics of quantum operations}
It is essential, when quantifying the thermodynamic costs of operations, to keep track of the effects upon auxiliary systems.  If an auxiliary is disregarded, then the quantum operation would appear to be performed without generating heat.  However, this leaves the auxiliary in a state ${\rho_{\rm A}^{n\prime}}=\tr_{\rm S} [V\rho^n_{\rm S} \otimes \rho_{\rm A} V^\dag]$.  Simply discarding the auxiliary would, on average, change the entropy of the environment by $\ln 2 \left[S(\rho_{\rm A}') - S(\rho_{\rm A})\right]$ where $\rho_{\rm A}' = \tr_{\rm S} [V\rho_{\rm S}\otimes \rho_{\rm A} V^{\dagger}]=\sum_n p_n {\rho_{\rm A}^{n\prime}}$.

The most straightforward way to deal with this is to reset the auxiliary to its original state $\rho_{\rm A}$. To do this a thermal environment is introduced in the form of a heat bath in a canonical state $\rho_{\rm B}$ at temperature $T$. The reset can be performed by some unitary $W$ acting on both the auxiliary and the heat bath such that $\rho_{\rm A}=\tr_{\rm B}[W\rho'_{\rm A}\otimes \rho_{\rm B} W^\dag]$ (see Figure \ref{F0}). This operation would transfer at least $kT \ln 2 \left[S(\rho'_{\rm A}) - S(\rho_{\rm A})\right]$ of heat to the heat bath \cite{PAV}.  Either way, the cost to the environment will generally be more than required by Equation~(\ref{bound}).  A quantum operation typically leaves correlations between system and auxiliary, with a mutual information \cite{VV} of $S(\rho^\prime_{\rm S}\colon\rho^\prime_{\rm A})=\left[S(\rho'_{\rm S}) - S(\rho_{\rm S})\right] + \left[S(\rho_{\rm A}') - S(\rho_{\rm A})\right] \geq 0$.  Simply resetting (or discarding) the auxiliary will pay $k T \ln 2 S(\rho_{\rm S}' \colon\rho_{\rm A}')$ as an excess cost unless there is no correlation: $V\rho_{\rm S}\otimes \rho_{\rm A} V^{\dagger}=\rho_{\rm S}' \otimes \rho_{\rm A}'$.

To find the minimum thermodynamic cost of the quantum operation, the auxiliary must be reset more efficiently, exploiting correlations with the system. The unitary $V$ must be embedded in a larger unitary $U$ which includes interactions between system, auxiliary, and heat bath. $U$ must preserve the output of the computation
\begin{align}
\tr_{\rm AB} [U\rho^n_{\rm S}\otimes\rho_{\rm A}\otimes\rho_{\rm B}U^{\dagger}] &= {\rho^{n\prime}_{\rm S}};
\label{Ereset0}
\end{align}
while resetting the auxiliary
\begin{align}
\tr_{\rm SB} [U\rho_{\rm S}\otimes\rho_{\rm A}\otimes\rho_{\rm B}U^{\dagger}] &= \rho_{\rm A}
\label{Ereset}
\end{align}
(see Figure \ref{F0}). 
Standard calculations show that this implies Equation (\ref{bound}) (see Appendix \ref{Sqtd}). If the quantum operation is only defined for a single signal state, then physical implementations are possible which can get arbitrarily close to the equality in (\ref{bound}) (Appendix \ref{ARev}).

In general, a quantum operation must produce the result ${\rho^{n\prime}_{\rm S}}=\mathcal{Q}(\rho^n_{\rm S})$, for multiple signal states $\{\rho_{\rm S}^n\}$. The same $U$ must satisfy (\ref{Ereset0}) for all $n$.  Our principal result is to show this additional constraint forces a higher thermodynamic cost than Equation~(\ref{bound}). An operation which generates a quantity of heat, $\epsilon_Q$, in excess of this bound will be thermodynamically irreversible, as any second operation which restores the original average state $\rho_{\rm S}=\mathcal{Q'}(\rho_{\rm S}')$ must necessarily leave a net heat gain in the heat bath of at least $\epsilon_Q$ over the complete cycle. Quantum computations cannot, in general, be performed in a thermodynamically reversible manner.

\section{The Theorem}

Let $\rho^n_{\rm S}\rightarrow \rho^{n\prime}_{\rm S} = \tr_{\rm AB} [U \rho^n_{\rm S}\otimes \rho_{\rm A}\otimes\rho_{\rm B} U^{\dagger}]$ be a quantum operation $Q$ where $\rho_{\rm B}$ is the state of a canonical heat bath at temperature $T$, and $\rho_{\rm A}$ is a standard state of an auxiliary which should be restored by the operation [see Equation~(\ref{Ereset})].

The average system input state $\rho_{\rm S}$ can be expressed in terms of its diagonal basis vectors $\{|\phi_i\rangle\}$ as
\begin{align}
\rho_{\rm S} = \sum_n p_n \rho^n_{\rm S} =\sum_i \lambda_i |\phi_i\rangle\langle \phi_i |.
\nonumber
\end{align}
Similarly the average system output state can be expressed in terms of its diagonal basis vectors $\{|\phi'_k\rangle\}$ as
\begin{align}
\rho'_{\rm S} = \sum_n p_n \rho^{n\prime}_{\rm S} =\sum_k \lambda'_k |\phi'_k\rangle\langle \phi'_k|.
\nonumber
\end{align}
In terms of these basis vectors the individual inputs and outputs are represented by
\begin{align}
\rho^n_{\rm S} =  \sum_{ij} \mu_{ij}^n |\phi_i\rangle\langle \phi_j | \quad \text{and} \quad
\rho^{n\prime}_{\rm S} =  \sum_{kl} \mu_{kl}^{n\prime}|\phi'_k\rangle\langle \phi'_l |.
\nonumber
\end{align}
We now state the theorem:
\begin{theorem}
\label{th1}
If there exists a stochastic map, $P_Q(k|i)$ with
\begin{align}
\sum_k P_Q(k|i) = 1 \quad \text{and} \quad P_Q(k|i) \geq 0 \; \forall i,k, \nonumber
\end{align}
such that
\begin{align}
\mu^{n\prime}_{kl} = \delta_{kl} \sum_i P_Q(k|i) \mu^n_{ii},
\label{revcon}
\end{align}
for all $n$, then the minimum thermodynamic cost of the operation
\begin{align}
\Delta E_Q \geq kT \ln 2 \left[ S(\rho_{\rm S}) - S(\rho'_{\rm S})\right]+\epsilon_Q,
\nonumber
\end{align}
can approach $\epsilon_Q=0$. Otherwise, provided there are no symmetries of the form $\lambda_i/\lambda_j=\lambda'_k/\lambda'_l$ where $i\neq j$ or $k\neq l$, then necessarily $\epsilon_Q>0$.
\end{theorem}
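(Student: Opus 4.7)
The plan is to prove the two directions separately: sufficiency by explicit construction of a near-reversible implementation, and necessity by tracking the equality conditions in the derivation of (\ref{bound}) combined with the per-signal constraint (\ref{Ereset0}) on $U$.

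\textbf{Sufficiency.} Assume $P_Q(k|i)$ exists. Then each $\rho^{n\prime}_{\rm S}$ is diagonal in the common basis $\{|\phi'_k\rangle\}$ with eigenvalues depending only on the diagonal entries $\mu^n_{ii}$, so $\mathcal{Q}$ acts as a measure--prepare channel: measure in $\{|\phi_i\rangle\}$, apply the classical stochastic transition $i\to k$, prepare $|\phi'_k\rangle$. I would realise this using a two-register auxiliary and the heat bath in three stages: a unitary that coherently records the $\{|\phi_i\rangle\}$-outcome onto the first register at no energetic cost; a classical Landauer-style randomisation on that register implementing $P_Q(k|i)$, whose heat cost reduces to the classical bound (\ref{clp}) applied to input distribution $\{\lambda_i\}$ and output $\{\lambda'_k\}$, namely $kT\ln 2\,[S(\rho_{\rm S})-S(\rho'_{\rm S})]$; and a controlled preparation of $|\phi'_k\rangle$ on the system followed by uncomputation of the pointer. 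Each stage is implementable with heat approaching its entropic minimum, so $\epsilon_Q\to 0$.

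\textbf{Necessity and the main obstacle.} Suppose $\epsilon_Q=0$ is attainable in the limit. The derivation of (\ref{bound}) in Appendix \ref{Sqtd} chains unitary invariance of the total entropy, subadditivity across S, A, B, and the Klein bound $\Delta S_{\rm B}\le\beta\Delta E_{\rm B}$. Saturation forces, in the large-bath limit, that the \emph{average} final joint state factorises as $\rho'_{\rm S}\otimes\rho_{\rm A}\otimes\tilde\rho_{\rm B}$ with $\tilde\rho_{\rm B}$ canonical. The hard part is lifting this average statement to a \emph{per-signal} statement on each $\sigma^n_{\rm SA}=\tr_{\rm B}[U\rho^n_{\rm S}\otimes\rho_{\rm A}\otimes\rho_{\rm B}U^\dagger]$. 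I would expand $\sigma^n_{\rm SA}$ in the $\{|\phi'_k\rangle\}$-basis as $\sum_{kl}|\phi'_k\rangle\langle\phi'_l|\otimes M^n_{kl}$, combine the average-product condition $\sum_n p_n M^n_{kl}=\delta_{kl}\lambda'_k\rho_{\rm A}$ with the marginal constraints from (\ref{Ereset0}) and (\ref{Ereset}), positivity of each $\sigma^n_{\rm SA}$, and the spectral rigidity imposed by unitarity of $U$ on the product spectrum of $\rho_{\rm S}\otimes\rho_{\rm A}\otimes\rho_{\rm B}$, and use the no-symmetry hypothesis $\lambda_i/\lambda_j\neq\lambda'_k/\lambda'_l$ to rule out any off-diagonal or signal-correlated structure that could cancel in the average. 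This forces each $\sigma^n_{\rm SA}=\rho^{n\prime}_{\rm S}\otimes\rho_{\rm A}$ with $\rho^{n\prime}_{\rm S}$ diagonal in $\{|\phi'_k\rangle\}$; the stochastic map $P_Q(k|i)$ is then read off from $\mathcal{Q}$'s action on the diagonal components $\mu^n_{ii}$ (the only data it can depend on), and non-negativity and normalisation follow automatically.

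\textbf{Role of the symmetry hypothesis.} The uniqueness step inside the necessity argument is the most delicate ingredient: when eigenvalue ratios coincide, $U$ can permute spectrally identical sectors and create signal-conditioned correlations that vanish only on average, providing a loophole that evades the classical stochastic-map structure while still saturating (\ref{bound}). Excluding such coincidences is exactly what is needed for the classical structure to be \emph{forced} rather than merely permitted, and explains why the theorem isolates this hypothesis.
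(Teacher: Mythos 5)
Your sufficiency construction is essentially the paper's own protocol (Appendix \ref{proto}): correlate an auxiliary register to the $\{|\phi_i\rangle\}$ basis, implement $P_Q(k|i)$ thermodynamically, and exploit the residual system--register correlations when resetting. Two cautions: as staged by you (randomise the pointer first, then do a controlled preparation of $|\phi'_k\rangle$ on the system, then ``uncompute'' the pointer), the controlled preparation overwrites a system state that is still correlated with the pointer, and that erasure costs $kT\ln 2\, H(I|K)$ which your accounting omits; the books only balance if, as in the paper, the new randomness is created directly in the system conditioned on the retained record $i$ (gaining $kT\ln2\,H(K|I)$) and the record is then reset conditioned on the system (paying $kT\ln2\,H(I|K)$), giving the net $kT\ln2[S(\rho_{\rm S})-S(\rho'_{\rm S})]$. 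Also note the protocol only has to reset the auxiliary \emph{on average}, not per signal.

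The necessity direction is where there is a genuine gap. First, $\epsilon_Q=0$ is a statement about a limit over implementations (typically with ever larger baths), not about exact saturation by any single $U$; the paper shows explicitly (Appendix \ref{Sqtd}) that $S(\rho'_{\rm B}||\rho_{\rm B})$ can tend to zero while $\Delta E_Q$ stays finite, so you may never assume the exact factorisation $\rho'=\rho'_{\rm S}\otimes\rho_{\rm A}\otimes\tilde\rho_{\rm B}$ and then ``lift'' it. What is needed is a quantitative bound, uniform over all implementations, relating the excess cost to implementation-independent data; the paper gets this from the chain $\epsilon_Q\geq\tfrac12 kT\|\rho'-\rho_\star\|_1^2$ (Corollary \ref{C0}, via Pinsker) together with Lemma \ref{Lem3}, whose proof rests on the exact operator identity of Appendix \ref{SUR},
\begin{align}
(\lambda_j\lambda'_k-\lambda_i\lambda'_l)\,q(kl|ij)=\sum_m\left(\lambda'_k\,\tr[A_{ki}A_{mj}^{\dagger}\Delta_{ml}]-\lambda'_l\,\tr[A_{mi}A_{lj}^{\dagger}\Delta_{km}]\right),
\nonumber
\end{align}
which is precisely where the eigenvalue-ratio condition enters: it converts any non-vanishing ``dangerous'' coefficient $q(kl|ij)$ (with $i\neq j$ or $k\neq l$ and $\lambda_i/\lambda_j\neq\lambda'_k/\lambda'_l$) into a lower bound on $\|\rho'-\rho_\star\|_1$, while the constraint $\mu^{n\prime}_{kl}=\sum_{ij}q(kl|ij)\mu^n_{ij}$ forces some such coefficient to be non-zero whenever Equation~(\ref{revcon}) fails. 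Your sketch contains no analogue of this mechanism: the appeals to ``spectral rigidity,'' positivity, and ``ruling out signal-correlated structure that could cancel in the average'' are statements of intent, not arguments, and your claimed conclusion that each per-signal post-state factorises as $\rho^{n\prime}_{\rm S}\otimes\rho_{\rm A}$ is stronger than anything the hypotheses give (or than the paper needs --- only the average reset of ${\rm A}$ is imposed). Without a lemma of the type (\ref{lem2EQ}) tying $q(kl|ij)$, the combination $\lambda_j\lambda'_k-\lambda_i\lambda'_l$, and $\|\rho'-\rho_\star\|_1$ together, the role of the no-symmetry hypothesis is left unexplained and the claimed strict positivity of $\epsilon_Q$ is not established.
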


Note that the set of properties $\{\{\lambda_i\},\{\lambda'_k\},\{\mu_{ij}^n\},\{{\mu_{kl}^{n\prime}}\}\}$, used to state the theorem are defined solely in terms of the quantum operation, independently of the particular physical process used to implement it.

\subsection*{Outline of proof}
In Appendix \ref{NAFS} we show that the implementation of the operation can be re-written in the form
\begin{align}
\mu^{n\prime}_{kl}=\sum_{ij} q(kl|ij) \mu^n_{ij},
\label{mapp}
\end{align}
where the complex coefficients $q(kl|ij)$ carry the effects of interaction with the environment. They have the properties $q(kk|ii)=q^*(kk|ii)\geq 0$, and $\sum_k q(kk|ij)=\delta_{ij}$. In Appendix \ref{Sqtd} we show that for a given physical implementation of the quantum operation where $\rho'= U \rho_{\rm S}\otimes \rho_{\rm A}\otimes \rho_{\rm B} U^{\dagger}$, and $\rho_\star=\rho'_{\rm S} \otimes \rho_{\rm A}\otimes\rho_{\rm B} $,
\begin{align}
\epsilon_Q\geq \frac{1}{2}kT|| \rho'-\rho_\star||_1^2.
\label{newbound}
\end{align}
Finally in Appendix \ref{SUR} we show that
\begin{align}\label{eqbound1}
|| \rho'-\rho_\star||_1 \geq \frac{|\lambda_j\lambda'_k - \lambda_i\lambda'_l|}{\lambda'_k+\lambda'_l} \left| q(kl|ij) \right|.
\end{align}
It follows that if there is a value of $\left| q(kl|ij) \right|>0$ for which $\lambda_i/\lambda_j \neq \lambda'_k/\lambda'_l$, then the implementation has an excess thermodynamic cost.

The coefficients $q(kk|ii)$ imply no bound as $\lambda_i/\lambda_j = \lambda'_k/\lambda'_l$ automatically holds for them. If Equation~(\ref{revcon}) holds, then $q(kk|ii) = P_Q(k|i)$ gives an implementation with no excess cost (see Appendix \ref{proto} for an explicit construction).  Otherwise, there must be some $\left|q(kl|ij)\right|>0$ for $i \neq j$ or $k \neq l$.  If $\lambda_i/\lambda_j \neq \lambda'_k/\lambda'_l$ for any of these $ijkl$ values, then there is an excess cost for that implementation.  We can therefore use numerical optimisation techniques to find the coefficients satisfying (\ref{mapp}) which minimise the largest value of (\ref{eqbound1}), and Equation~(\ref{newbound}) shows this gives a lower bound for $\epsilon_Q$.

When the output states do not have a shared diagonalised basis, so that there is some $\mu^{n\prime}_{kl}>0$ for $k \neq l$, there is also an analytical bound:
\begin{align}
\epsilon_Q  \geq \frac{1}{2}kT
\max_{n,k,l\neq k}\left\{
\frac
{|\mu^{n\prime}_{kl}|/(\lambda'_k+\lambda'_l)}
{\sum_{ij} |\mu^n_{ij}|/|\lambda_j\lambda'_k - \lambda_i\lambda'_l|}
\right\}^2. \label{eqoffdiagM}
\end{align}
For further details see Appendix \ref{secPROOF}.

This energy bound and others derived here may not be tight. It is an open problem to demonstrate a protocol which reaches $\epsilon_Q$ - our results only place a non-zero lower bound on the excess energy cost.

\begin{figure*}
        \begin{center}
        	\includegraphics[width=0.8\textwidth]{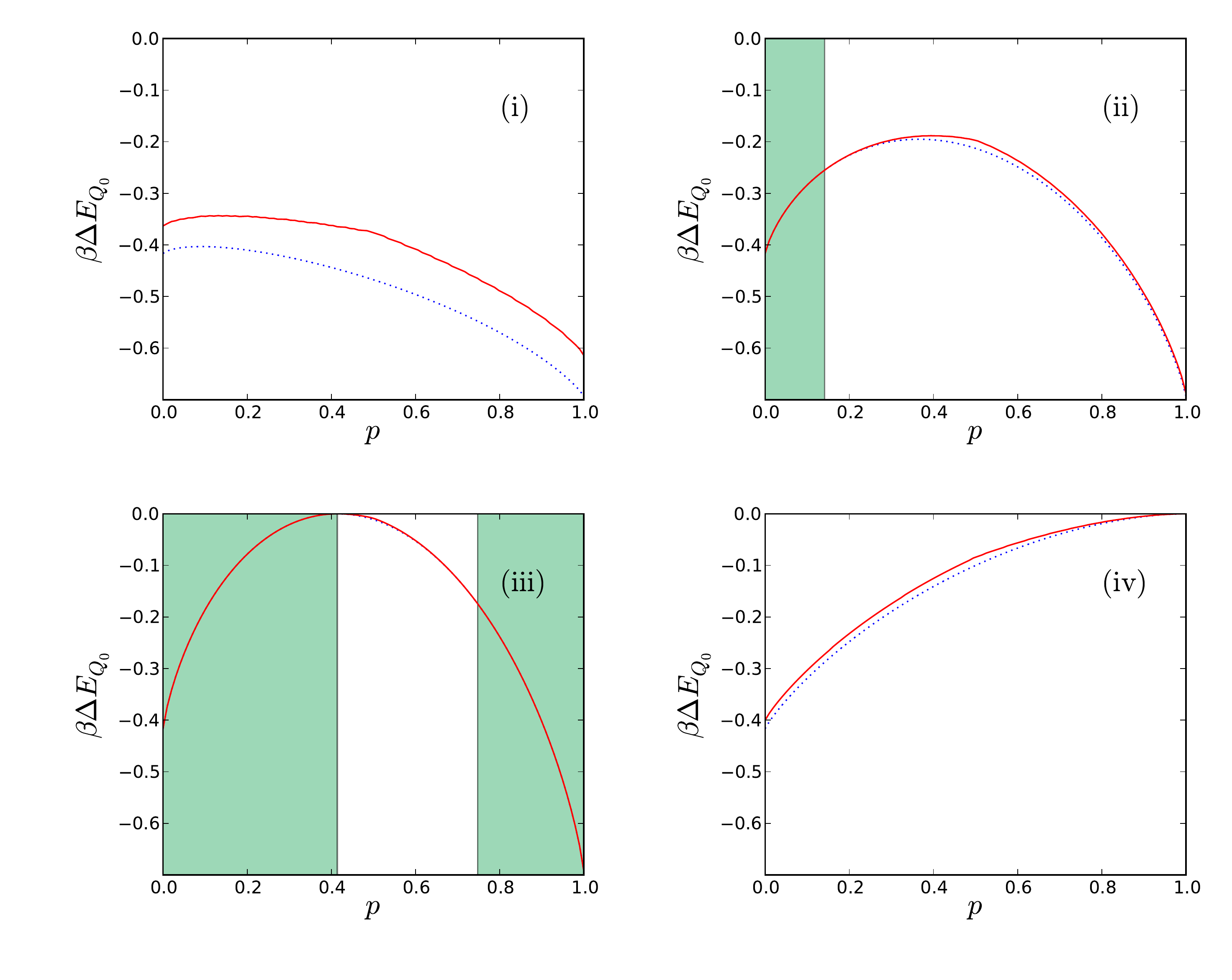}
        \end{center}
\caption{The figure shows the minimum energy change (scaled by $\beta = 1/kT$) in the heat bath environment as a result of a qubit dephasing operation $Q_0$. Notice that for the dephasing operation the energy cost is negative - the operation draws energy from the heat bath. This means that the operation can be used to perform work. In each case the inputs of the operation are $|v^1_{\rm S}\rangle$ with probability $p$ and $|v^2_{\rm S}\rangle $ with probability $1-p$. Each panel shows the energy change as a function of $p$ for a different pair of input vectors: (i) input vectors $|v^1_{\rm S}\rangle = |+\rangle= \frac{1}{\sqrt{2}}|0\rangle + \frac{1}{\sqrt{2}}|1\rangle $  and $|v^2_{\rm S}\rangle =\cos\frac{\pi}{8}|0\rangle + \sin\frac{\pi}{8}|1\rangle$; (ii) input vectors $|v^1_{\rm S}\rangle = \frac{1}{\sqrt{2}}|0\rangle + i \frac{1}{\sqrt{2}}|1\rangle  $  and $|v^2_{\rm S}\rangle = \cos\frac{\pi}{8}|0\rangle + \sin\frac{\pi}{8}|1\rangle$; (iii) input vectors $|v^1_{\rm S}\rangle = |-\rangle=\frac{1}{\sqrt{2}}|0\rangle - \frac{1}{\sqrt{2}}|1\rangle  $  and $|v^2_{\rm S}\rangle = \cos\frac{\pi}{8}|0\rangle + \sin\frac{\pi}{8}|1\rangle$; (iv) input vectors $|v^1_{\rm S}\rangle = |0\rangle $  and $|v^2_{\rm S}\rangle = \cos\frac{\pi}{8}|0\rangle + \sin\frac{\pi}{8}|1\rangle$. The blue dotted lines indicate the thermodynamically reversible bound on the energy cost given by Equation~(\ref{bound}); the red lines indicate the lower bounds determined in this article. The green bands indicate where the bound on the energy cost satisfies the condition of thermodynamic reversibility. Within these regions there exists a protocol to achieve the reversible energy cost.
}
\label{F1}
\end{figure*}

\begin{figure*}
        \begin{center}
        	\includegraphics[width=1.0\textwidth]{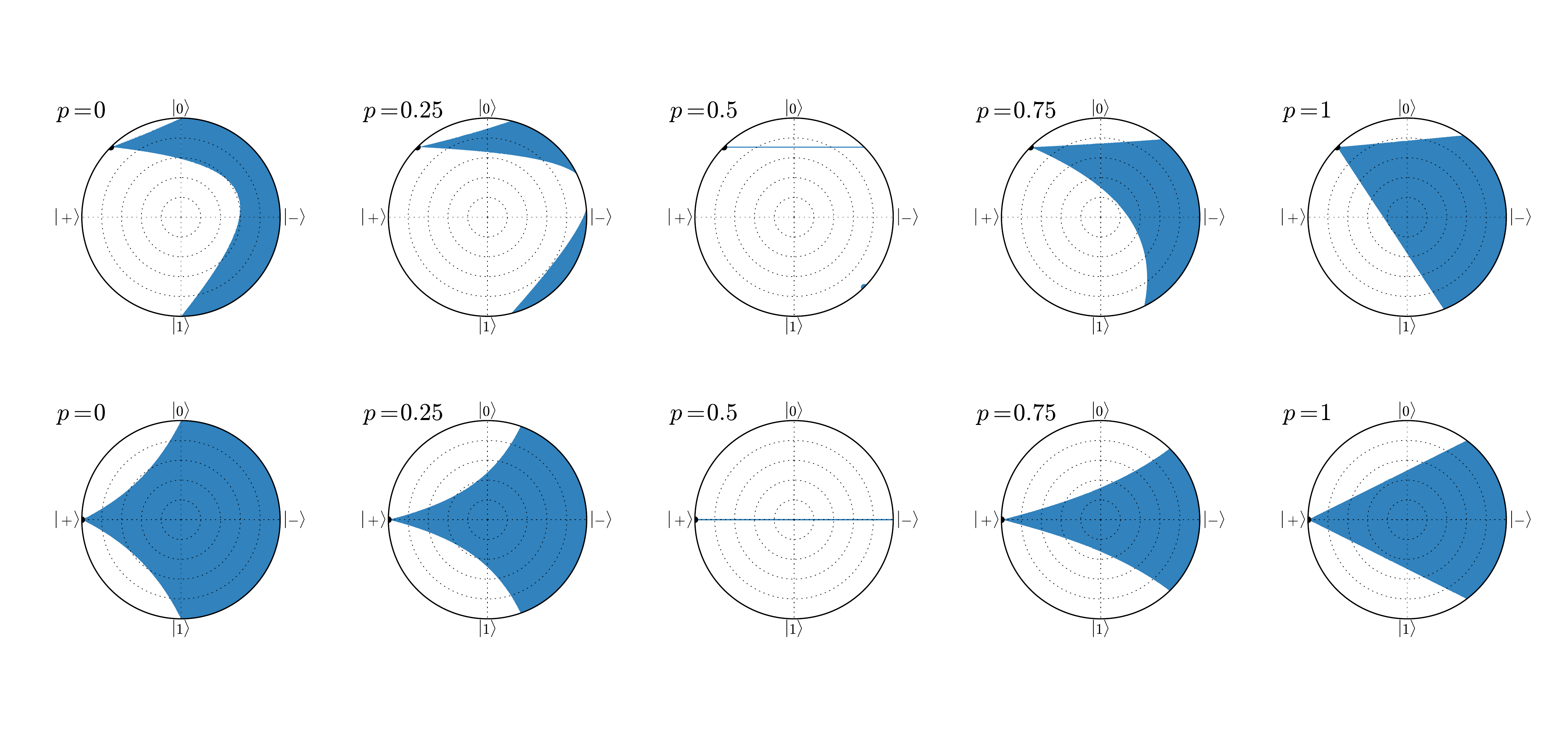}
        \end{center}
\caption{The figure shows the Bloch sphere from one side. For a given pure state input $\rho_{\rm S}^1 = |v_{\rm S}^1\rangle\langle v_{\rm S}^1 |$, occurring with probability $p$, each figure shows the regions of a second pure state input $\rho_{\rm S}^2 = |v_{\rm S}^2\rangle\langle v_{\rm S}^2 |$, occurring with probability $p-1$, such that the condition of thermodynamic reversibility can be satisfied for a qubit dephasing operation $Q_0$. In the top row, the fixed input vector is $|v^1_{\rm S}\rangle = \cos\frac{\pi}{8}|0\rangle + \sin\frac{\pi}{8}|1\rangle$ (marked by a black dot); in the bottom row the fixed input vector is $|v^1_{\rm S}\rangle = |+\rangle=\frac{1}{\sqrt{2}}|0\rangle + \frac{1}{\sqrt{2}}|1\rangle$. From left to right the value of $p$ varies from $0$ to $1$. For given $p$ the shaded region contains the vectors $|v^2_{\rm S}\rangle$ for which $\epsilon_{Q}=0$ and the operation can be thermodynamically reversible. When $p=0.5$ this region contains only the points on the same circle of latitude as $|v^1_{\rm S}\rangle$ and the point opposite $|v^1_{\rm S}\rangle$. When $p=0$ the shaded region does not correspond to the entire Bloch sphere since we still demand that the implementation should give that correct output for both possible inputs (similarly for $p=1$). The images are symmetric when viewed from the opposite hemisphere. Also, if we include a phase shift to the vector $|v^1_{\rm S}\rangle$ the regions rotate around the Bloch sphere with $|v^1_{\rm S}\rangle$, maintaining their form.
}
\label{F2}
\end{figure*}

\section{Examples of thermodynamically {reversible} operations}
There are well known cases of quantum operations where $\epsilon_Q=0$.  We now show how they fit into our proof.

\textit{Pure unitary.}
In a pure unitary quantum operation, $\rho^{n\prime}_{\rm S}=U\rho^n_{\rm S} U^\dag$.  This does not involve an auxiliary or heat bath, and does not change the eigenvalues of any input signal states.  Solutions with $q(kl|ij)=\delta_{ik}\delta_{jl}$ satisfy the operation, and the exceptional symmetry condition applies: the only terms which contribute to Equation~(\ref{mapp}) are ones for which $\lambda_i/\lambda_j=\lambda'_k/\lambda'_l$.

\textit{Single input.}
When there is only one possible input $\rho^n_{\rm S} = \rho_{\rm S}$. As $\mu^n_{ij}=\delta_{ij} \lambda_i$ and $\mu^{n\prime}_{kl}=\delta_{kl}\lambda'_k$, choosing $P_Q(k|i) = \lambda'_k$ shows that Equation~(\ref{revcon}) holds.

\textit{Single output.} Resetting to a standard state is known to satisfy $\epsilon_Q=0$ \cite{SAG2,AT}.  In this case the output states for all inputs are of the form $\rho'_{\rm S} = \sum_k \lambda'_k \proj{\phi_k}$, so $\mu^{n\prime}_{kl}=\delta_{kl}\lambda'_k$.  Again, choosing $P_Q(k|i)=\lambda'_k$, shows that Equation~(\ref{revcon}) holds.

\textit{Classical information processing.}
For classical information processing, where the signal states correspond to orthogonal quantum states, $\epsilon_Q=0$ \cite{Mar05b,OM1,SAG1}.  $\rho^n_{\rm S} =  |\phi_n\rangle\langle \phi_n |$, so $\mu^n_{ij}=\delta_{ij} \delta_{in}$.  In general the outputs take the form $\rho^{n\prime}_{\rm S} =  \sum_{k} P_Q(k|n) |\phi'_k\rangle\langle \phi'_k |$, representing a distribution of possible outputs $|\phi'_k\rangle$, so $\mu^{n\prime}_{kl}=\delta_{kl} P_Q(k|n)$, and Equation~(\ref{revcon}) holds.

\section{Case study of a thermodynamically irreversible operation}
In general it will not be the case that $\epsilon_Q=0$.
Even the simplest quantum operations will fail to satisfy Equation~(\ref{revcon}). As an example, we consider single qubit dephasing operations, which exhibit all the features of the Theorem.  Inputs of the form $\rho^n_{\rm S} = |v^n_{\rm S}\rangle\langle v^n_{\rm S}|$ with $|v^n_{\rm S}\rangle = c_0^n|0_{\rm S}\rangle + c_1^n|1_{\rm S}\rangle$ give outputs $\rho^{n\prime}_{\rm S}(r) =r\rho^n_{\rm S}+ (1-r) \left[ |c_0^n|^2|0_{\rm S}\rangle\langle 0_{\rm S}| + |c_1^n|^2|1_{\rm S}\rangle\langle 1_{\rm S}|\right]$.  Each value of $r$ defines a different quantum operation, $\rho^{n\prime}_{\rm S}(r)=\mathcal{Q}_r(\rho^n_{\rm S})$, with $Q_1$ the identity, and $Q_0$ completely dephasing the qubit.

The operation $Q_r$ can be implemented straightforwardly by a quantum CNOT gate and an auxiliary, making our Theorem open to experimental investigation \cite{VYH+2015}. With the auxiliary initially prepared in the state $\alpha|0_{\rm A}\rangle + \beta|1_{\rm A}\rangle$ acting as the target, we have $r=\left(\alpha^* \beta+ \alpha \beta^* \right)$. 

\textit{Non-diagonalisable outputs.}
When $0<r<1$, the outputs, $\rho^{n\prime}_{\rm S}(r)$, will not typically be simultaneously diagonalisable.  Consider $r=\frac{1}{\sqrt{2}}$ with two possible inputs: $|0_{\rm S}\rangle$ occurring with probability $0.3$, and $|+_{\rm S}\rangle = \frac{1}{\sqrt{2}}|0_{\rm S}\rangle + \frac{1}{\sqrt{2}}|1_{\rm S}\rangle$ occurring with probability $0.7$. After tracing away the auxiliary we find that the individual outputs are not simultaneously diagonalisable.  There is enough information to calculate the lower bound on $\epsilon_Q$ from Equation~(\ref{eqoffdiagM}) and this is found to be $0.0007\times kT$.  This can be compared with the bound $kT\ln 2\left[S(\rho_{{\rm S}}) - S(\rho'_{{\rm S}})\right] = -0.15 \times kT$. An overall negative value of $\Delta E_Q$ indicates that this operation may still be used to extract energy from the heat bath.

\textit{Diagonalisable outputs.}
For $Q_0$, the outputs will be simultaneously diagonalised in the basis $\{\ket{0_{\rm S}},\ket{1_{\rm S}}\}$.  We choose two pure state inputs: $\rho^1_{\rm S}$ with probability $p$, and $\rho^2_{\rm S}$ with probability $1-p$.  The average input density matrix is therefore $\rho_{\rm S} = p\rho^1_{\rm S} + (1-p)\rho^2_{\rm S}$. Figure \ref{F1} shows the minimum energy change in the heat bath with $p$ for four different pairs of inputs. We see that for some combinations of inputs there are no values of $p$ for which the operation can be thermodynamically reversible, and it necessarily requires an excess energy cost. For other inputs there are regions of $p$ where there is no excess cost and regions where there is a minimum non zero excess cost. Figure \ref{F2} shows, for a given $\rho^1_{\rm S}$, the regions of $\rho^2_{\rm S}$ on the Bloch sphere where a zero excess cost is possible for various values of $p$.  Further details of how to determine the minimum energy change in the heat bath can be found in Appendix \ref{ADP}.

The qubit dephasing example shows, in particular, that $\epsilon_{Q}=0$ is possible both with $\Delta E_Q<0$ and with non-trivial non-orthogonal input states, and that $\epsilon_{Q}>0$ is possible even when the output states share a common diagonalisation.

\section{Discussion}

The excess thermodynamic costs of quantum operations stem from the requirement that the operation should get the computation right for every individual input, and not just on average. For well known cases, including classical information processing, we have shown how thermodynamic reversibility can be reached as a special case.  However, for general quantum operations, with non-orthogonal signal states, an excess thermodynamic cost must be paid.

This might seem counter intuitive, since quantum computations can always be represented as unitary operations, which can always be run in reverse.  How can thermodynamic irreversibility arise with a reversible unitary operation?  Simply running the overall unitary operation in reverse does not just tidy up the auxiliary, returning it to its initial state, but it also undoes the computation, converting the output back to the input.  If we wish to retain the result of the computation, we cannot simply reverse the unitary operation. In the classical reversible computing model of Bennett \cite{BEN2}, we save a copy of the output before reversing the computation.  For quantum operations this is forbidden by the no-cloning theorem, and no quantum generalisation of Bennett's procedure is possible \cite{OM2}.

As long as we keep the results of the computation, we are left with the changes in the auxiliaries, representing spent resources and a cost to reset them to their initial states.  This is the stage at which thermodynamic costs are incurred.  We have shown that in general, for quantum operations this cost is necessarily in excess of the Landauer Limit given in Equation~(\ref{bound}).  Quantum computing requires a thermodynamically irreversible generation of heat.

\section*{Acknowledgements}
We would like to thank Chris Timpson and Oscar Dahlsten for discussions. This work was funded by the Templeton World Charity Foundation.

\appendix

\section{Notation and formal set up}
\label{NAFS}

Given an input $\rho^n_{\rm S}$ and an output ${\cal Q}(\rho^n_{\rm S}) = \rho^{n\prime}_{\rm S}$ of a quantum operation $Q$ on a quantum system ${\rm S}$ we can describe the operation as
\begin{align}
\rho^{n\prime}_{\rm S} = \tr_{\rm AB} [U \rho_{\rm S}^n\otimes \rho_{\rm A}\otimes \rho_{\rm B} U^{\dagger}],
\label{qop}
\end{align}
where the subscript on the trace indicates which components have been traced over. The unitary $U$ acts on an initial state $\rho_{\rm S}^n \otimes \rho_{\rm A}\otimes \rho_{\rm B}$ which includes, besides the system, a thermal environment in the form of a canonical heat bath, $\rho_{\rm B}$, and an auxiliary to be used as a catalyst $\rho_{\rm A}$.
Given that inputs $\rho^n_{\rm S}$ appear with probability $p_n$ we can define an average system input state and an average system output state by
\begin{align}
\rho_{\rm S} = \sum_n p_n \rho^n_{\rm S}  \quad \text{and} \quad
\rho'_{\rm S} = \sum_n p_n \rho^{n\prime}_{\rm S},
\nonumber
\end{align}
respectively.

The auxiliary is a resource which is spent by the operation. In order to account for this cost we demand that it must be returned to the same standard state $\rho_{\rm A}$ at the end of the computation
\begin{align}
\tr_{\rm SB} [U \rho_{\rm S}\otimes \rho_{\rm A}\otimes\rho_{\rm B} U^{\dagger}] = \rho_{\rm A}.
\nonumber
\end{align}
This ensures that the auxiliary can be reused in subsequent computations. We need only demand that the auxiliary is reset on average.

For convenience we define the following notation
\begin{align}
\rho_{\rm AB} &= \rho_{\rm A}\otimes\rho_{\rm B}, \nonumber \\
\rho' &= U (\rho_{\rm S}\otimes\rho_{\rm A}\otimes\rho_{\rm B}) U^{\dagger}, \nonumber \\
\rho_{\rm B}' &=\tr_{\rm SA} [U (\rho_{\rm S}\otimes\rho_{\rm A}\otimes\rho_{\rm B}) U^{\dagger}].
\nonumber
\end{align}

We can choose to work in a basis in which $\rho_{\rm S}$ is diagonal. In general $\rho'_{\rm S}$ will not be diagonal in this basis but we can perform a final rotation to make it so:
\begin{align}
R\rho'_{\rm S} R^{\dagger} =  \tr_{\rm AB} [(RU)\rho_{\rm S} \otimes \rho_{\rm AB} (RU)^{\dagger}] ,
\nonumber
\end{align}
or
\begin{align}
\bar{\rho}'_A = \tr_{\rm AB} [\bar{U} \rho_{\rm S}\otimes \rho_{\rm AB} \bar{U}^{\dagger}].
\nonumber
\end{align}
Now drop the bars from the notation and assume without loss of generality that the average system inputs and outputs are diagonalised in the same basis. Let us denote these diagonal basis states by $\{|\phi_i\rangle\}$. We can write
\begin{align}
\rho_{\rm S}  = \sum_i \lambda_i |\phi_i\rangle\langle \phi_i | \quad \text{and} \quad
\rho'_{\rm S} = \sum_i \lambda'_i |\phi_i\rangle\langle \phi_i |.
\nonumber 
\end{align}
In this basis an individual input and output state can in general be expressed as
\begin{align}
\rho^n_{\rm S} =  \sum_{ij} \mu^n_{ij} |\phi_i\rangle\langle \phi_j | \quad \text{and} \quad
\rho^{n\prime}_{\rm S} =  \sum_{ij} \mu^{n\prime}_{ij }|\phi_i\rangle\langle \phi_j |.
\label{inout}
\end{align}

We define
\begin{align}
A_{ij} = \langle \phi_i |U|\phi_j\rangle.
\nonumber
\end{align}
This is a bounded operator acting on the total environment of auxiliary and heat bath. That it is bounded can be demonstrated as follows: $U$ is a bounded operator since $|U|\psi\rangle| = ||\psi\rangle|$ for a state of system, auxiliary, and heat bath $|\psi\rangle$. This means that the product of the bounded operators $|\phi_i\rangle\langle\phi_i|U|\phi_j\rangle\langle\phi_j|$ is also bounded. In fact $||\phi_i\rangle\langle\phi_i|U|\phi_j\rangle\langle\phi_j|\psi\rangle| \leq  ||\psi\rangle|$. If we choose $|\psi\rangle = |\chi\rangle|\phi_j\rangle$ where $|\chi\rangle$ is an arbitrary total environment state we then find that $|A_{ij}|\chi\rangle| \leq ||\chi\rangle|$ so that $A_{ij}$ is a bounded operator.

We can express the unitary operator $U$ as
\begin{align}
U =  \sum_{ij}  |\phi_i\rangle\langle\phi_j| \otimes A_{ij}.
\label{uphi}
\end{align}
These operators must satisfy
\begin{align}
\langle \phi_i|UU^{\dagger} |\phi_j\rangle &= \sum_k A_{ik}A_{jk}^{\dagger} = \delta_{ij} \mathbb{1}_{\rm AB} ,
\label{g4}\\
\langle \phi_i|U^{\dagger}U |\phi_j\rangle &= \sum_k A_{ki}^{\dagger}A_{kj} = \delta_{ij} \mathbb{1}_{\rm AB}.
\label{g5}
\end{align}
The unitary operation on the complete state results in
\begin{align}
\langle \phi_i|\rho' |\phi_j\rangle
=\sum_k \lambda_k A_{ik} \rho_{\rm AB} A_{jk}^{\dagger}.
\label{g1}
\end{align}
It will also be useful to define the state $\rho_\star = \rho'_{\rm S}\otimes\rho_{\rm AB}$ which we can express as
\begin{align}
\langle \phi_i|\rho_{\star} |\phi_j\rangle
=\lambda'_i \delta_{ij} \rho_{\rm AB},
\label{g2}
\end{align}
and we denote
\begin{align}
\langle \phi_i|(\rho' - \rho_{\star}) |\phi_j\rangle = \Delta_{ij}.
\label{g3}
\end{align}
The result of the quantum operation on individual inputs (\ref{qop}) can be written using (\ref{inout})  and (\ref{uphi}) as
\begin{align}
\mu^{n\prime}_{kl} = \sum_{ij} \mu^n_{ij}\tr_{} [A_{ki}\rho_{\rm AB} A_{lj}^{\dagger}].
\label{indiv}
\end{align}
We will also use the notation
\begin{align}
q(kl|ij) =  \tr_{} [A_{ki}\rho_{\rm AB} A_{lj}^{\dagger}].
\label{qcoef}
\end{align}

\section{Quantum thermodynamics}
\label{Sqtd}
\begin{lemma}
\label{l0}
The energy change in the heat bath as a result of a quantum operation $Q$ on a system ${\rm S}$ implemented by some unitary $U \rho_{\rm S}\otimes \rho_{\rm A}\otimes \rho_{\rm B} U^{\dagger} = \rho'$, involving a resetting of the auxiliary system ${\rm A}$, is given by
\begin{align}
\frac{\Delta E_Q}{kT\ln 2} =\left[ S(\rho_{\rm S}) - S(\rho'_{\rm S})\right] + S(\rho'||\rho_{\star}).
\label{Eresult0}
\end{align}
where $\rho_{\star} = \rho'_{\rm S}\otimes\rho_{\rm A}\otimes\rho_{\rm B}$.
\end{lemma}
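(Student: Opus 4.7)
The plan is a direct calculation that hinges on two standard tricks: rewriting the heat bath Hamiltonian in terms of $\log \rho_{\rm B}$ via the canonical form, and using unitary invariance of the total von Neumann entropy together with the reset condition for the auxiliary.

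First, I would identify $\Delta E_Q$ with the change in expectation of the heat bath Hamiltonian,
\begin{align}
\Delta E_Q = \tr[H_{\rm B}(\rho'_{\rm B}-\rho_{\rm B})], \nonumber
\end{align}
with $\rho'_{\rm B}=\tr_{\rm SA}\rho'$. Because $\rho_{\rm B}$ is canonical at temperature $T$, one has $-\beta H_{\rm B}=\ln\rho_{\rm B}+\ln Z$, which makes the $\ln Z$ pieces cancel and gives
\begin{align}
\frac{\beta \Delta E_Q}{\ln 2}
= -\tr[\rho'_{\rm B}\log_2\rho_{\rm B}] - S(\rho_{\rm B}). \nonumber
\end{align}
Adding and subtracting $\tr[\rho'_{\rm B}\log_2\rho'_{\rm B}]$ packages this as
\begin{align}
\frac{\beta \Delta E_Q}{\ln 2}
= [S(\rho'_{\rm B})-S(\rho_{\rm B})] + S(\rho'_{\rm B}\|\rho_{\rm B}). \nonumber
\end{align}

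Next I would evaluate $S(\rho'\|\rho_\star)$ directly. Since $\rho_\star=\rho'_{\rm S}\otimes\rho_{\rm A}\otimes\rho_{\rm B}$ is a tensor product, $\log_2\rho_\star$ splits additively into three single-subsystem terms, so
\begin{align}
\tr[\rho'\log_2\rho_\star]
= -S(\rho'_{\rm S}) + \tr[\rho'_{\rm A}\log_2\rho_{\rm A}] + \tr[\rho'_{\rm B}\log_2\rho_{\rm B}]. \nonumber
\end{align}
The reset condition forces $\rho'_{\rm A}=\rho_{\rm A}$, which turns the middle term into $-S(\rho_{\rm A})$. Unitary invariance of entropy applied to $\rho'=U(\rho_{\rm S}\otimes\rho_{\rm A}\otimes\rho_{\rm B})U^{\dagger}$ gives $S(\rho')=S(\rho_{\rm S})+S(\rho_{\rm A})+S(\rho_{\rm B})$. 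Assembling,
\begin{align}
S(\rho'\|\rho_\star)
&= -S(\rho') - \tr[\rho'\log_2\rho_\star] \nonumber \\
&= [S(\rho'_{\rm S})-S(\rho_{\rm S})] - S(\rho_{\rm B}) - \tr[\rho'_{\rm B}\log_2\rho_{\rm B}]. \nonumber
\end{align}

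Finally I would recognise that $-\tr[\rho'_{\rm B}\log_2\rho_{\rm B}]=S(\rho'_{\rm B})+S(\rho'_{\rm B}\|\rho_{\rm B})$, which converts the previous display into
\begin{align}
S(\rho'\|\rho_\star)
= [S(\rho'_{\rm S})-S(\rho_{\rm S})] + [S(\rho'_{\rm B})-S(\rho_{\rm B})] + S(\rho'_{\rm B}\|\rho_{\rm B}). \nonumber
\end{align}
Comparing with the expression for $\beta\Delta E_Q/\ln 2$ derived above and rearranging yields the claimed identity. The only genuine ``obstacle'' is bookkeeping: keeping base-$2$ versus natural logarithms consistent and using the reset condition $\rho'_{\rm A}=\rho_{\rm A}$ in exactly the right place so that the $S(\rho_{\rm A})$ terms cancel. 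No inequality or nontrivial structural lemma is needed; everything follows from the canonical form of $\rho_{\rm B}$, additivity of $\log$ on product states, unitarity of $U$, and the definition of the relative entropy.
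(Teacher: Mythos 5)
Your proof is correct and follows essentially the same route as the paper: both rest on the identity $\Delta E_Q/(kT\ln 2)=S(\rho'_{\rm B})-S(\rho_{\rm B})+S(\rho'_{\rm B}\|\rho_{\rm B})$, unitary invariance plus additivity of entropy on the initial product state, and the reset condition $\rho'_{\rm A}=\rho_{\rm A}$ inside the factorised $\log_2\rho_\star$. The only difference is cosmetic: you derive the bath identity directly from the canonical form of $\rho_{\rm B}$, whereas the paper cites it as a standard result.
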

The Schumacher information measure is given by $S(\rho) = -\tr \left[\rho\log_2\rho\right]$ and we use a definition of the relative entropy using log base 2
\begin{align}
S(\rho||\sigma) = \tr \left[\rho\log_2\rho\right]-\tr \left[\rho\log_2\sigma\right].
\nonumber
\end{align}
\begin{proof}
The energy change in the bath is given by $\Delta E_Q = \tr  [H_{\rm B}(\rho_{\rm B}'-\rho_{\rm B})]$ where $H_{\rm B}$ is the heat bath Hamiltonian. A standard result is that \cite{PAV}
\begin{align}
\frac{\Delta E_Q}{kT\ln 2}
=S(\rho'_{\rm B}) - S(\rho_{\rm B})  + S(\rho'_{\rm B}||\rho_{\rm B}).
\nonumber
\end{align}
It follows that
\begin{align}
\frac{\Delta E_Q}{kT\ln 2}
=&S(\rho'_{\rm B}) - \left[S(\rho) - S(\rho_{\rm S}) - S(\rho_{\rm A})\right]  + S(\rho'_{\rm B}||\rho_{\rm B}) \nonumber \\
=&S(\rho'_{\rm B}) - \left[S(\rho') - S(\rho_{\rm S}) - S(\rho_{\rm A})\right]  + S(\rho'_{\rm B}||\rho_{\rm B}) \nonumber \\
=& \left[ S(\rho_{{\rm S}}) - S(\rho'_{{\rm S}})\right] + S(\rho'_{\rm B}||\rho_{\rm B})\nonumber\\
&\quad+\left[ S(\rho'_{{\rm S}}) + S(\rho_{\rm A}) + S(\rho'_{\rm B}) - S(\rho') \right] .
\label{DEcalc}
\end{align}
Since the last term in square brackets in the last line must be positive by subadditivity and the relative entropy term must be positive as a result of Klein's inequality, we must have
\begin{align}
\Delta E_Q \geq  kT\ln 2\left[ S(\rho_{\rm S}) - S(\rho'_{\rm S})\right] .
\label{LAND}
\end{align}
The operation satisfies thermodynamic reversibility if the inequality is saturated. This can be seen by forming
a second operation $\rho_{\rm S} = {\cal Q}'(\rho_{\rm S}')$ which restores the original average state. There must be a net heat gain in the heat bath if, for any part of the cycle, the equality in (\ref{LAND}) does not hold.

Write $\rho_\star = \rho_{{\rm S}}'\otimes \rho_{\rm A}\otimes\rho_{\rm B} $.
As a result of the fact that $\rho_\star$ factorises we find that
\begin{align}
S(\rho'||\rho_{\star}) = & \tr[\rho'\log_2\rho'] - \tr [\rho' \left( \log_2\rho'_{\rm S}\otimes {\mathbb 1}_{\rm A}\otimes{\mathbb 1}_{\rm B}\right)]
\nonumber\\
&- \tr [\rho' \left( {\mathbb 1}_{\rm S}\otimes \log_2\rho_{\rm A} \otimes{\mathbb 1}_{\rm B}\right)] \nonumber\\
&-  \tr[\rho'\left({\mathbb 1}_{\rm S}\otimes{\mathbb 1}_{\rm A}\otimes\log_2\rho_{\rm B}\right)]\nonumber\\
=& \tr[\rho'\log_2\rho'] - \tr[\rho'_{\rm S}\log_2\rho'_{\rm S}]
\nonumber\\ & - \tr[\rho_{\rm A}\log_2\rho_{\rm A}] - \tr[\rho'_{\rm B}\log_2\rho_{\rm B}]
\nonumber\\
=&S(\rho_{\rm B}'||\rho_{\rm B})\nonumber\\
&+\left[S(\rho_{\rm S}') + S(\rho_{\rm A}) + S(\rho_{\rm B}') - S(\rho') \right] ,
\label{REcalc}
\end{align}
so that from (\ref{DEcalc})
\begin{align}
\frac{\Delta E_Q}{kT\ln 2} =\left[ S(\rho_{{\rm S}}) - S(\rho'_{{\rm S}})\right] + S(\rho'||\rho_{\star}).
\nonumber
\end{align}
The quantity $S(\rho'||\rho_{\star}) \geq 0$ therefore encodes the extent to which the thermodynamic bound on the energy change (\ref{LAND}) is breached.
\end{proof}

The demand for $S(\rho'||\rho_{\star})=0$ requires from (\ref{REcalc}) that both $S(\rho') = S(\rho_{{\rm S}}') + S(\rho_{\rm A}) + S(\rho_{\rm B}')$ and $S(\rho_{\rm B}'||\rho_{\rm B}) = 0$. In particular the second of these conditions suggests an output state with $\rho_{\rm B}' = \rho_{\rm B}$ and therefore no energy change in the heat bath. In fact, in the limit where the dimension of the heat bath becomes large, it is possible to have $\rho_{\rm B}'$ sufficiently close to $\rho_{\rm B}$ such that $S(\rho'_{\rm B}||\rho_{\rm B})<s$ for some arbitrarily small $s$ whilst at the same time $\Delta E_Q > E$ for some fixed non zero $E$. To see this we write $\rho_{\rm B}' = \rho_{\rm B} + \varepsilon\Delta$ where $\Delta$ is a fixed traceless matrix and $\varepsilon$ is a small parameter. The change in energy of the heat bath is given by
\begin{align}
\Delta E_Q = \tr  [H_{\rm B}(\rho_{\rm B}'-\rho_{\rm B})] = \varepsilon \tr [ H_{\rm B} \Delta ].
\nonumber
\end{align}
Now suppose that we have $N$ identical independent copies of the same heat bath (this will also be a canonical state). If we further suppose that each copy undergoes the same uncorrelated change $\rho_{\rm B}' = \rho_{\rm B} + \varepsilon\Delta$, then to lowest order, the energy change of the $N$-copy heat bath is
\begin{align}
\Delta E_Q^{(N)} = N\Delta E_Q.
\nonumber
\end{align}
This means that we can take $\varepsilon \propto 1/N$ and as $N$ becomes large, $\Delta E_Q^{(N)}$ remains fixed.

Now consider the relative entropy between $\rho_{\rm B}$ and $\rho_{\rm B}'$
\begin{align}
S(\rho_{\rm B}'||\rho_{\rm B})  &=  \varepsilon\frac{d}{d\varepsilon}S(\rho_{\rm B} + \varepsilon\Delta||\rho_{\rm B})|_{\varepsilon = 0} + {\cal O}(\varepsilon^2) \nonumber\\
&= \varepsilon\tr\left[\frac{\Delta}{ \ln 2} + \Delta\log_2\rho_{\rm B} - \Delta\log_2\rho_{\rm B}\right]+ {\cal O}(\varepsilon^2) \nonumber\\
&=  {\cal O}(\varepsilon^2).
\nonumber
\end{align}
The correction to the relative entropy is at least quadratic in $\varepsilon$. The relative entropy for the $N$-copy heat bath is
\begin{align}
S(\rho_{\rm B}^{\prime\otimes N}||\rho_{\rm B}^{\otimes N})  = N S(\rho_{\rm B}'||\rho_{\rm B}).
\nonumber
\end{align}
Therefore, in the limit that $N$ becomes large and $\varepsilon \propto 1/N$ tends to zero, the relative entropy for the $N$-copy heat bath tends to zero. This simple example demonstrates that it is legitimate for the relative entropy to tend to zero whilst the energy change in the heat bath remains non zero.

\begin{corollary}
\label{C0}
The energy change in the heat bath is bounded by
\begin{align}
\Delta E_Q \geq {kT\ln2} \left[ S(\rho_{{\rm S}}) - S(\rho'_{{\rm S}})\right] + \frac{1}{2}kT|| \rho'-\rho_{\star}||_1^2.
\label{Eresult}
\end{align}
\end{corollary}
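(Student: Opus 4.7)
The plan is to start from the exact identity provided by Lemma \ref{l0},
\begin{align}
\frac{\Delta E_Q}{kT\ln 2} = \left[ S(\rho_{\rm S}) - S(\rho'_{\rm S})\right] + S(\rho'\|\rho_{\star}), \nonumber
\end{align}
and then bound the non-negative relative entropy term from below in terms of the trace distance $\|\rho'-\rho_\star\|_1$. The desired conclusion has exactly the form of the Landauer-type term inherited from Lemma \ref{l0} plus an additional piece quadratic in the trace distance, so the whole content of the corollary reduces to finding a lower bound on $S(\rho'\|\rho_\star)$ in terms of $\|\rho'-\rho_\star\|_1$.

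The natural tool here is the quantum Pinsker inequality, which states that for any two density operators $\rho,\sigma$ on the same Hilbert space,
\begin{align}
S(\rho\|\sigma)\;\geq\;\frac{1}{2\ln 2}\,\|\rho-\sigma\|_1^{2}, \nonumber
\end{align}
where the factor $1/\ln 2$ appears because $S(\cdot\|\cdot)$ is defined with $\log_{2}$ in this paper. Applying this to $\rho=\rho'$ and $\sigma=\rho_{\star}$, multiplying through by $kT\ln 2$, and substituting into the identity of Lemma \ref{l0} immediately gives
\begin{align}
\Delta E_Q \;\geq\; kT\ln 2\left[S(\rho_{\rm S}) - S(\rho'_{\rm S})\right] + \tfrac{1}{2}kT\,\|\rho'-\rho_{\star}\|_1^{2}, \nonumber
\end{align}
which is precisely the statement of the corollary.

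The only genuine work, and the place where I would expect to need a small self-contained argument (rather than just a citation), is the quantum Pinsker inequality itself. I would derive it by the standard reduction to the classical case: choose the two-outcome projective measurement associated with the Jordan decomposition of $\rho'-\rho_\star$ (i.e.\ project onto the positive and negative eigenspaces of the Hermitian operator $\rho'-\rho_\star$); the probabilities of the two outcomes under $\rho'$ and $\rho_\star$ differ by exactly $\tfrac{1}{2}\|\rho'-\rho_\star\|_1$. By monotonicity of the quantum relative entropy under CPTP (in particular, measurement) maps, $S(\rho'\|\rho_\star)$ is at least the classical Kullback–Leibler divergence between those two Bernoulli distributions, and the classical Pinsker inequality then gives the required bound. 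This is the only non-routine step; once it is in hand, the corollary follows in one line from Lemma \ref{l0}, with no further work needed.
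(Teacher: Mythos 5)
Your proposal is correct and follows essentially the same route as the paper: the paper's proof of Corollary \ref{C0} likewise starts from the identity of Lemma \ref{l0} and applies the quantum Pinsker inequality $S(\rho\|\sigma)\geq \frac{1}{2\ln 2}\|\rho-\sigma\|_1^2$ (which the paper simply cites rather than deriving). Your sketch of Pinsker via the two-outcome measurement and monotonicity is a fine self-contained substitute for that citation, but adds nothing beyond the paper's argument.
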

\begin{proof}
We use \cite{REbound1,REbound2}
\begin{align}
S(\rho||\sigma) \geq \frac{1}{2\ln 2}|| \rho-\sigma||_1^2.
\nonumber
\end{align}
The result follows from Equation~(\ref{Eresult0}).
\end{proof}

\section{Standard thermodynamically reversible protocol}
\label{ARev}
There is a standard protocol for reaching the bound given in Equation~(\ref{LAND}), appearing in various forms in the literature (\cite{Maroney2007a,VV} for examples), when the quantum operation is defined only to act upon an individual density matrix $\rho_{\rm S}'=\mathcal{Q}(\rho_{\rm S})$ of a system ${\rm S}$ (see also \cite{SH,AG2013} for alternative protocols to reach the bound).  Note that these protocols do not require that either the initial or final states are in thermal equilibrium.

The density matrix is initially in the state
\begin{align}
\rho_{\rm S}=\sum_i \lambda_i \proj{\phi_i},
\nonumber
\end{align}
where $\{\ket{\phi_i}\}$ are orthonormal eigenstates. The result of the operation $Q$ is a state
\begin{align}
\rho_{\rm S}'=\sum_i \lambda_i' \proj{\phi'_i},
\nonumber
\end{align}
for a possibly different set of orthonormal eigenstates $\{\ket{\phi'_i}\}$. We assume that both the initial and final sets of eigenstates are fully degenerate in energy with energy level zero. This avoids the complication of the system being used as a source or sink of energy.

{\it Step 1}.  Starting with $\rho_{\rm S}$, manipulate the energy levels $E_i$ of each eigenstate $\ket{\phi_i}$ until
\begin{align}
\lambda_i = \frac{e^{-\beta E_i}}{\sum_j e^{-\beta E_j}}.
\nonumber
\end{align}
where $\beta = 1/kT$. This has a mean cost $\Delta E_1=\sum_i \lambda_i E_i$.  The new density matrix will now be canonically distributed at temperature $T$.

{\it Step 2}. Bring the system into contact with a heat bath at temperature $T$.

{\it Step 3}. Slowly, isothermally, change the energy eigenstates until they satisfy
\begin{align}
\lambda'_i = \frac{e^{-\beta E'_i}}{\sum_j e^{-\beta E'_j}}.
\nonumber
\end{align}
The energy requirement for this is $\Delta E_2 = kT\ln\sum_j e^{-\beta E_j} - kT\ln\sum_j e^{-\beta E'_j}$.  The density matrix is now canonically distributed at temperature $T$, with eigenvalues $\{\lambda'_i\}$.

{\it Step 4}. Remove the system from contact with the heat bath.

{\it Step 5}. Change the energy levels of the eigenstates back to zero, with a mean cost $\Delta E_3=-\sum_i \lambda_i' E'_i$.

{\it Step 6}.  Perform a unitary rotation to the final eigenstates $\ket{\phi_i} \rightarrow \ket{\phi_i'}$.

The operation is complete with a cost of
\begin{align}
\Delta E_1 + \Delta E_2+\Delta E_3 = kT\ln2 \left[S(\rho_{\rm S}) - S(\rho_{\rm S}')\right].
\nonumber
\end{align}
It is worth noting that this protocol works even when the system is a joint system with non-trivial correlations or entanglement between subsystems \cite{VV}.

During the stages in which we manipulate the energy levels of the system we are assuming a time-dependent Hamiltonian $H(t)$ for the system and a standard definition of the mean rate of work given by $\tr [\rho \partial H/ \partial t]$. This presupposes some idealised work reservoir, capable of doing work on (or taking work from) the system by manipulation of the Hamiltonian.

We will denote this operation $T_{\rho_{\rm S}\rightarrow \rho_{\rm S}'}$. As shown in \cite{RW2014}, the theoretical energy cost can only be acheived in an aymptotic limit. Nevertheless, there is no physical principle which prevents us from getting arbitrarily close this value.

\section{Some useful results}
\label{SUR}

\begin{lemma}
The following relation holds
\begin{align}
 &(\lambda_j \lambda'_k  - \lambda_i\lambda'_l)q(kl|ij)  =
\nonumber\\
& \sum_m\left(\lambda'_k \tr [A_{ki}A_{mj}^{\dagger}\Delta_{ml}] -
\lambda'_l \tr [A_{mi}A_{lj}^{\dagger}\Delta_{km}]
\right).
\label{l1}
\end{align}
\end{lemma}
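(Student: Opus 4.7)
The plan is a direct calculation: substitute the definition of $\Delta_{ml}$ (and $\Delta_{km}$) in terms of $\rho'$ and $\rho_{\star}$ using Equations~(\ref{g1})--(\ref{g3}), and then use the unitarity conditions (\ref{g4})--(\ref{g5}) to collapse the auxiliary sums down to a multiple of $q(kl|ij)=\tr[A_{ki}\rho_{\rm AB}A_{lj}^{\dagger}]$. No estimates are needed; every step is an equality.

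First I would write
\begin{align}
\Delta_{ml} = \sum_n \lambda_n A_{mn}\rho_{\rm AB}A_{ln}^{\dagger} - \lambda'_m\delta_{ml}\rho_{\rm AB},
\nonumber
\end{align}
and an analogous expression for $\Delta_{km}$, and insert each into its trace on the right-hand side of~(\ref{l1}). The $\rho_{\star}$-piece immediately produces $-\lambda'_k\lambda'_l\,q(kl|ij)$ from the first sum and $+\lambda'_k\lambda'_l\,q(kl|ij)$ from the second, so these cancel outright.

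For the $\rho'$-pieces, the cyclicity of the trace is the only subtle ingredient. In the first term, $\sum_{m,n}\lambda_n\tr[A_{ki}A_{mj}^{\dagger}A_{mn}\rho_{\rm AB}A_{ln}^{\dagger}]$, the factor $\sum_m A_{mj}^{\dagger}A_{mn}=\delta_{jn}\mathbb{1}_{\rm AB}$ from~(\ref{g5}) fires directly, giving $\lambda_j\,\tr[A_{ki}\rho_{\rm AB}A_{lj}^{\dagger}]=\lambda_j\,q(kl|ij)$, multiplied by the prefactor $\lambda'_k$. In the second term, $\sum_{m,n}\lambda_n\tr[A_{mi}A_{lj}^{\dagger}A_{kn}\rho_{\rm AB}A_{mn}^{\dagger}]$, one first cycles $A_{mn}^{\dagger}$ to the front of the trace and applies $\sum_m A_{mn}^{\dagger}A_{mi}=\delta_{ni}\mathbb{1}_{\rm AB}$, yielding $\lambda_i\,q(kl|ij)$, multiplied by $\lambda'_l$. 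Combining the four contributions gives exactly $(\lambda_j\lambda'_k-\lambda_i\lambda'_l)q(kl|ij)$, establishing~(\ref{l1}).

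The only real obstacle is bookkeeping: the two traces on the right-hand side are not structurally symmetric (the $\Delta$ sits in different positions relative to $\rho_{\rm AB}$), so the unitarity identity must be applied once as written and once after a cyclic rotation. Once the correct rotation is performed, the two unitarity relations do all the work and the result follows in a line.
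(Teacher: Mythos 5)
Your calculation is correct and follows essentially the same route as the paper's (very brief) proof: substitute for $\Delta_{ml}$ and $\Delta_{km}$ using Equations~(\ref{g1})--(\ref{g3}), then apply the unitarity identity (\ref{g5}) together with cyclicity of the trace to recover $q(kl|ij)$. One small labelling slip: the two $\rho_\star$ contributions are each $\lambda'_k\lambda'_l\,\tr[A_{ki}A_{lj}^{\dagger}\rho_{\rm AB}]$ rather than $\lambda'_k\lambda'_l\,q(kl|ij)=\lambda'_k\lambda'_l\,\tr[A_{ki}\rho_{\rm AB}A_{lj}^{\dagger}]$, but since they are identical traces appearing with opposite signs they cancel exactly as you claim, so the argument is unaffected.
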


\begin{proof}
In order to prove this relation use Equations~(\ref{g1}), (\ref{g2}), and (\ref{g3}) to substitute for $\Delta_{ij}$. Then use the identity (\ref{g5}) along with the cyclic property of the trace. The coefficients $q(kl|ij)$ are given in (\ref{qcoef}).
\end{proof}

\begin{lemma}
\label{Lem3}
The following bound can be placed on the output state $\rho'$:
\begin{align}
||\rho'-\rho_{\star}||_1 \geq \frac{|\lambda_j\lambda'_k - \lambda_i\lambda'_l|}{\lambda'_k+\lambda'_l}
  \left| q(kl|ij) \right|  .
\label{lem2EQ}
\end{align}
\end{lemma}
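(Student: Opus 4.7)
The plan is to convert the identity (\ref{l1}) into the claimed inequality via the variational bound $|\tr[X(\rho'-\rho_{\star})]|\leq||X||_\infty\,||\rho'-\rho_{\star}||_1$, with the test operators $X$ built from $U$ and a rank-one system projector so that $||X||_\infty\leq 1$.

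First, I would rewrite each summand on the right of (\ref{l1}) as a trace against $\rho'-\rho_{\star}$ on the full system-auxiliary-bath space. Since $\Delta_{ml}=\langle\phi_m|(\rho'-\rho_{\star})|\phi_l\rangle$ is the $(m,l)$ block of $\rho'-\rho_{\star}$ in the $\{|\phi_i\rangle\}$ basis, a one-line calculation gives
\begin{align*}
\tr_{\rm AB}[A_{ki}A_{mj}^{\dagger}\Delta_{ml}] = \tr[(|\phi_l\rangle\langle\phi_m|\otimes A_{ki}A_{mj}^{\dagger})(\rho'-\rho_{\star})],
\end{align*}
and analogously the second term becomes the trace of $(|\phi_m\rangle\langle\phi_k|\otimes A_{mi}A_{lj}^{\dagger})(\rho'-\rho_{\star})$.

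The crucial step is to resum over $m$ \emph{inside} the trace, before taking absolute values. From the expansion (\ref{uphi}) of $U$ one has $\sum_m |\phi_i\rangle\langle\phi_m|\otimes A_{mj}^{\dagger} = (|\phi_i\rangle\langle\phi_j|\otimes\mathbb{1}_{\rm AB})\,U^{\dagger}$ and, symmetrically, $\sum_m |\phi_m\rangle\langle\phi_j|\otimes A_{mi} = U\,(|\phi_i\rangle\langle\phi_j|\otimes\mathbb{1}_{\rm AB})$. These identities collapse the two $m$-sums to the single operators
\begin{align*}
X_1 = (|\phi_l\rangle\langle\phi_j|\otimes A_{ki})\,U^{\dagger},\qquad X_2 = U\,(|\phi_i\rangle\langle\phi_k|\otimes A_{lj}^{\dagger}).
\end{align*}
Unitarity of $U$ combined with the completeness relations (\ref{g4}), (\ref{g5}) (which imply $A_{ki}^{\dagger}A_{ki}\leq\mathbb{1}_{\rm AB}$ and $A_{lj}A_{lj}^{\dagger}\leq\mathbb{1}_{\rm AB}$, so that $||A_{ki}||_\infty,||A_{lj}||_\infty\leq 1$) then force $||X_1||_\infty,||X_2||_\infty\leq 1$.

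Taking absolute values in (\ref{l1}), applying the triangle inequality to split the two terms, and invoking the variational bound once for each yields
\begin{align*}
|\lambda_j\lambda'_k - \lambda_i\lambda'_l|\,|q(kl|ij)| \leq (\lambda'_k+\lambda'_l)\,||\rho'-\rho_{\star}||_1,
\end{align*}
and dividing through by $\lambda'_k+\lambda'_l$ is the claim. The main obstacle is recognising that the summation index $m$ in (\ref{l1}) is exactly the one needed to reconstitute $U$ or $U^{\dagger}$; without this collapse, a naive term-by-term estimate on $|\tr[A_{ki}A_{mj}^{\dagger}\Delta_{ml}]|$ would pick up an uncontrolled factor growing with the auxiliary-bath dimension and destroy the bound.
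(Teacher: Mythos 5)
Your proof is correct, and it shares the paper's skeleton (rewrite each summand of (\ref{l1}) as a trace of $|\phi_l\rangle\langle\phi_m|\otimes A_{ki}A_{mj}^{\dagger}$, resp.\ $|\phi_m\rangle\langle\phi_k|\otimes A_{mi}A_{lj}^{\dagger}$, against $\rho'-\rho_{\star}$ -- these are exactly the paper's $F_{ijklm}$ and $F_{jilkm}^{\dagger}$, and your $X_1,X_2$ are its $G^L_{ijkl},G^R_{ijkl}$), but the key bounding step is genuinely different and cleaner. The paper keeps the $m$-sum, expands $G^{L/R}$ and $\rho'-\rho_{\star}$ in the basis diagonalising the latter, and uses the triangle and Cauchy--Schwarz inequalities together with the observation that $G^LG^{L\dagger}=\sum_m F_{ijklm}F^{\dagger}_{ijklm}$ is a partial sum of a positive-operator resolution of the identity, which bounds only the \emph{diagonal} entries of $G^LG^{L\dagger}$ by $1$ -- enough in that particular basis. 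You instead collapse the $m$-sum via $\sum_m|\phi_i\rangle\langle\phi_m|\otimes A^{\dagger}_{mj}=(|\phi_i\rangle\langle\phi_j|\otimes\mathbb{1}_{\rm AB})U^{\dagger}$ and its mirror image (both identities check out against (\ref{uphi})), obtaining the factorisations $G^L_{ijkl}=(|\phi_l\rangle\langle\phi_j|\otimes A_{ki})U^{\dagger}$ and $G^R_{ijkl}=U(|\phi_i\rangle\langle\phi_k|\otimes A^{\dagger}_{lj})$; since (\ref{g4}) and (\ref{g5}) give $\|A_{ki}\|_{\infty},\|A_{lj}\|_{\infty}\leq 1$, this yields the full operator-norm bound $\|G^{L/R}_{ijkl}\|_{\infty}\leq 1$, and the trace-norm duality $|\tr[X(\rho'-\rho_{\star})]|\leq\|X\|_{\infty}\|\rho'-\rho_{\star}\|_1$ finishes in one line. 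What your route buys is a basis-independent, strictly stronger statement about the $G$ operators and a shorter estimate; what the paper's route buys is that the same partial-sum-of-positives trick transfers directly to the $H_{ijkm}$ operators of its Lemma \ref{Lem2}, where no such clean factorisation into a single product with $U$ is available. Your final pairing of $\lambda'_k,\lambda'_l$ with the two traces and the division by $\lambda'_k+\lambda'_l$ reproduce (\ref{lem2EQ}) exactly.
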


\begin{proof}
We start by defining operators acting on the complete state of system, auxiliary, and heat bath
\begin{align}\nonumber
F_{ijklm} = |\phi_l\rangle\langle\phi_m |\otimes A_{ki}A_{mj}^{\dagger} ,
\end{align}
in terms of which we define
\begin{align}\nonumber
G^L_{ijkl} = \sum_m  F_{ijklm}; \quad
G^R_{ijkl} = \sum_m F_{jilkm}^{\dagger}.
\end{align}
We can then express (\ref{l1}) as
\begin{align}
(\lambda_j &\lambda'_k - \lambda_i\lambda'_l)q(kl|ij) = \nonumber\\
&\left(\lambda'_k \tr [G^L_{ijkl}(\rho'-\rho_{\star})]- \lambda'_l \tr [G^R_{ijkl}(\rho'-\rho_{\star})]
\right).
\label{el3}
\end{align}
Let us define some arbitrary basis states $\{|\psi_x\rangle\}$ for the complete system, auxiliary, and heat bath. We can write
\begin{align}
 G^{L/R}_{ijkl} &= \sum_{xy} g^{L/R}_{xy}|\psi_x\rangle\langle \psi_y | ; \nonumber\\
(\rho' - \rho_{\star}) &= \sum_{xy} r_{xy}|\psi_x\rangle\langle \psi_y |. \nonumber
\end{align}
Then
\begin{align}
\left|\tr [G^L_{ijkl} (\rho' - \rho_{\star})]\right| &= \left|\sum_{xy} g^L_{xy}r_{yx}\right|  \nonumber\\
&\leq \sum_x \left|\sum_y g^L_{xy}r_{yx}\right| \nonumber\\
&\leq \sum_x \left(\sum_y |g^L_{xy}|^2\right)^{\frac{1}{2}}\left(\sum_z |r_{zx}|^2\right)^{\frac{1}{2}} ,
\label{G1eq}
\end{align}
where the second line follows from the triangle inequality and the third line follows from the Cauchy-Schwartz inequality. Similarly
\begin{align}
\left|\tr [G^R_{ijkl} (\rho' - \rho_{\star})]\right| \leq \sum_x \left(\sum_y |g^R_{yx}|^2\right)^{\frac{1}{2}}\left(\sum_z |r_{xz}|^2\right)^{\frac{1}{2}}.
\label{G2eq}
\end{align}
Next we show that
\begin{align}
\langle \psi_x| G^L_{ijkl}G^{L\dagger}_{ijkl} |\psi_x\rangle  = \sum_y |g^L_{xy}|^2 \leq 1,
\label{rdiag}
\end{align}
and
\begin{align}
\langle \psi_x| G^{R\dagger}_{ijkl}G^{R}_{ijkl} |\psi_x\rangle  = \sum_y |g^R_{yx}|^2 \leq 1.
\label{ldiag}
\end{align}
The operator products can be written as
\begin{align}
G^L_{ijkl}G^{L\dagger}_{ijkl} &= \sum_m F_{ijklm}F^{\dagger}_{ijklm} ; \nonumber\\
G^{R\dagger}_{ijkl}G^{R}_{ijkl} &= \sum_m F_{jilkm}F^{\dagger}_{jilkm} . \nonumber
\end{align}
We also have
\begin{align}\nonumber
\delta_{jj'}\delta_{kk'} \mathbb{1} = \sum_{ilm}F_{ijklm} F_{ij'k'lm}^{\dagger},
\end{align}
which follows from (\ref{g4}) and (\ref{g5}). Since the identity is expressed as a sum of positive operators and since $G^L_{ijkl}G^{L\dagger}_{ijkl}$ is a sum of a subset of these positive operators, it must be the case that the diagonal elements of $G^L_{ijkl}G^{L\dagger}_{ijkl}$ in the arbitrary basis $\{|\psi_x\rangle\}$ must belong to the range $[0,1]$ from which (\ref{rdiag}) follows. Similarly for (\ref{ldiag}). We can therefore write, from (\ref{G1eq}) and (\ref{G2eq}),
\begin{align}
\left|\tr [G^{L}_{ijkl} (\rho' - \rho_{\star})]\right| &\leq \sum_x \left(\sum_z |r_{zx}|^2\right)^{\frac{1}{2}} ,
\nonumber \\
\left|\tr [G^{R}_{ijkl} (\rho' - \rho_{\star})]\right| &\leq \sum_x \left(\sum_z |r_{xz}|^2\right)^{\frac{1}{2}} .\nonumber
\end{align}
If we choose the basis states $\{|\psi_x\rangle\}$ such that $(\rho' - \rho_{\star})$ is diagonal we have
\begin{align}\nonumber
\left|\tr [G^{L/R}_{ijkl} (\rho' - \rho_{\star})]\right| \leq ||\rho' - \rho_{\star}||_1.
\end{align}
Finally from (\ref{el3}), using the triangle inequality
\begin{align}
\left|\lambda_j \lambda'_k - \lambda_i\lambda'_l\right| \left| q(kl|ij)\right| & \leq
\nonumber\\
\lambda'_k \left|\tr [G^{R}_{ijkl}(\rho'-\rho_{\star})]\right| &+
 \lambda'_l \left|\tr [G^{L}_{ijkl}(\rho'-\rho_{\star})]\right|
\nonumber\\
&\leq (\lambda'_k+\lambda'_l)||\rho'-\rho_{\star}||_1, \nonumber
\end{align}
which completes the proof.
\end{proof}

\begin{lemma}[Special case when dimension $d=2$]
\label{Lem2}
When the dimension of the system space $d=2$, the following bound can be placed on the output state $\rho'$ for $i\neq j$:
\begin{align}\nonumber
||\rho'-\rho_{\star}||_1 \geq |\lambda_i - \lambda_j|
\left| q(kk|ij) \right| .
\end{align}
\end{lemma}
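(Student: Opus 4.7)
The plan is to sharpen Lemma~\ref{Lem3} in the $d=2$, $l=k$, $i\neq j$ case by combining the two trace terms in (\ref{el3}) into a single expression and applying the matrix H\"older inequality only once, rather than separately to each trace. The factor-of-two saving over Lemma~\ref{Lem3} arises precisely by avoiding the $\lambda'_k+\lambda'_l=2\lambda'_k$ combination that appears when the two pieces are bounded in turn.

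First, setting $l=k$ in (\ref{el3}) and cancelling the common $\lambda'_k$ gives
\begin{align}
(\lambda_j-\lambda_i)\,q(kk|ij) = \tr[M(\rho'-\rho_{\star})], \nonumber
\end{align}
with $M:=G^L_{ijkk}-G^R_{ijkk}$. The essential observation for $d=2$ is that the $m=k$ summands of $G^L_{ijkk}$ and $G^R_{ijkk}$ both equal $|\phi_k\rangle\langle\phi_k|\otimes A_{ki}A^\dagger_{kj}$ and therefore cancel in the difference. Writing $\bar k$ for the index complementary to $k$, only the $m=\bar k$ terms survive, and
\begin{align}
M = |\phi_k\rangle\langle\phi_{\bar k}|\otimes A_{ki}A^\dagger_{\bar k j} - |\phi_{\bar k}\rangle\langle\phi_k|\otimes A_{\bar k i}A^\dagger_{kj}. \nonumber
\end{align}

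Next I would apply the matrix H\"older inequality $|\tr[MX]|\leq ||M||_{\infty}\,||X||_1$ (with $||\cdot||_{\infty}$ denoting the operator norm) to $X=\rho'-\rho_{\star}$, reducing the task to proving $||M||_{\infty}\leq 1$. Because the two surviving terms of $M$ are supported on the orthogonal off-diagonal system blocks $|\phi_k\rangle\langle\phi_{\bar k}|$ and $|\phi_{\bar k}\rangle\langle\phi_k|$, all cross terms in $MM^{\dagger}$ vanish, leaving a block-diagonal operator with environment-side blocks $A_{ki}A^\dagger_{\bar k j}A_{\bar k j}A^\dagger_{ki}$ and $A_{\bar k i}A^\dagger_{kj}A_{kj}A^\dagger_{\bar k i}$.

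The final step is to control these blocks using (\ref{g5}): the identity $\sum_{k}A^\dagger_{ki}A_{ki}=\mathbb{1}_{\rm AB}$ expresses $\mathbb{1}_{\rm AB}$ as a sum of positive operators, so each summand satisfies $A^\dagger_{ki}A_{ki}\leq \mathbb{1}_{\rm AB}$ and every $A_{ij}$ (along with its adjoint) has operator norm at most one. Each block of $MM^{\dagger}$ is then a product of four contractions and itself bounded by $\mathbb{1}_{\rm AB}$, giving $||M||_{\infty}\leq 1$ and the claimed estimate. The main obstacle---and the reason the improvement is restricted to $d=2$---is precisely the clean block-diagonal structure of $MM^{\dagger}$: for $d\geq 3$ the additional $m\notin\{k\}$ summands in $G^L_{ijkk}$ and $G^R_{ijkk}$ produce off-diagonal cross terms in $MM^{\dagger}$ that obstruct the simple norm-one bound, so an analogous single-trace estimate would require a different argument.
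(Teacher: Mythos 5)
Your argument is correct and is essentially the paper's own proof: after setting $l=k$ and cancelling $\lambda'_k$, your operator $M$ is exactly $-H_{ijk\bar{k}}$ from the paper's proof, and your use of H\"older's trace inequality with $||M||_{\infty}\leq 1$ (via the block-diagonal form of $MM^{\dagger}$ and the contraction property $A^{\dagger}_{ki}A_{ki}\leq\mathbb{1}_{\rm AB}$ following from Equation~(\ref{g5})) is a streamlined version of the paper's step showing $\langle\psi_x|H_{ijkm}H^{\dagger}_{ijkm}|\psi_x\rangle\leq 1$ via the resolution of identity (\ref{pos2}), after which the paper sums over $m\neq k$ and sets $d=2$ so that the $(d-1)$ factor equals one. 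The only caveat, shared equally by the paper's reduction of (\ref{l1}) to (\ref{l2a}), is the implicit division by $\lambda'_k$; this is harmless because $\lambda'_k=0$ forces $q(kk|ij)=0$ whenever $\lambda_i\neq\lambda_j$, so the claimed inequality holds trivially in that case.
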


\begin{proof}
When $k = l$ Equation~(\ref{l1}) reduces to
\begin{align}
(\lambda_i & - \lambda_j) q(kk|ij) =  \nonumber\\
&
\sum_{m\neq k}\left(\tr [A_{mi}A_{kj}^{\dagger}\Delta_{km}] -  \tr [A_{ki}A_{mj}^{\dagger}\Delta_{mk}]
\right).
\label{l2a}
\end{align}
If $i=j$ both sides of this equation are zero. If $i\neq j$ we can define operators acting on the complete system, auxiliary, and heat bath
\begin{align}\nonumber
H_{ijkm} =  |\phi_m\rangle\langle\phi_k |\otimes A_{mi}A_{kj}^{\dagger}  -  |\phi_k\rangle\langle\phi_m |\otimes A_{ki}A_{mj}^{\dagger},
\end{align}
and then express (\ref{l2a}) as
\begin{align}
(\lambda_i - \lambda_j)q(kk|ij) =
\sum_{m\neq k} \tr [H_{ijkm} (\rho' - \rho_{\star})].
\label{l2b}
\end{align}
Define arbitrary basis states $\{|\psi_x\rangle\}$ for the complete system, auxiliary, and heat bath. We can write
\begin{align}
H_{ijkm} &= \sum_{xy} h_{xy}|\psi_x\rangle\langle \psi_y |; \nonumber\\
(\rho' - \rho_{\star}) &= \sum_{xy} r_{xy}|\psi_x\rangle\langle \psi_y |. \nonumber
\end{align}
Then
\begin{align}
\left|\tr [H_{ijkm} (\rho' - \rho_{\star})]\right| &= \left|\sum_{xy} h_{xy}r_{yx}\right|  \nonumber\\
&\leq \sum_x \left|\sum_y h_{xy}r_{yx}\right| \nonumber\\
&\leq \sum_x \left(\sum_y |h_{xy}|^2\right)^{\frac{1}{2}}\left(\sum_z |r_{zx}|^2\right)^{\frac{1}{2}} .
\label{Heq}
\end{align}
Next we show that
\begin{align}
\langle \psi_x|  H_{ijkm}  H_{ijkm} ^{\dagger} |\psi_x\rangle  = \sum_y |h_{xy}|^2 \leq 1.
\label{hdiag}
\end{align}
The operator product can be written
\begin{align}
H_{ijkm}H^{\dagger}_{ijkm} =  F_{ijmmk}F^{\dagger}_{ijmmk} + F_{ijkkm}F^{\dagger}_{ijkkm}.
\label{opprod}
\end{align}
We also have
\begin{align}
\delta_{jj'} \mathbb{1} = \sum_{ikm}F_{ijkkm} F_{ij'kkm}^{\dagger}.
\label{pos2}
\end{align}
Since (\ref{opprod}) is the sum of two constituent positive terms of the sum of positive operators (\ref{pos2}) then the same argument as that used in Lemma \ref{Lem3} can be used to show that (\ref{hdiag}) holds. It follows from (\ref{Heq}) that
\begin{align}\nonumber
\left|\tr [H_{ijkm} (\rho' - \rho_{\star})]\right| \leq \sum_x \left(\sum_z |r_{zx}|^2\right)^{\frac{1}{2}},
\end{align}
and if we choose the basis states $\{|\psi_x\rangle\}$ such that $(\rho' - \rho_{\star})$ is diagonal we have
\begin{align}\nonumber
\left|\tr [ H_{ijkm} (\rho' - \rho_{\star})]\right| \leq ||\rho' - \rho_{\star}||_1.
\end{align}
Inserting into (\ref{l2b}) we have
\begin{align}
\left|\lambda_i - \lambda_j\right|\left| q(kk|ij)\right| &\leq \sum_{m\neq k}
\left|\tr [H_{ijkm}(\rho'-\rho_{\star})]\right|
\nonumber\\
&\leq (d-1) ||\rho'-\rho_{\star}||_1,\nonumber
\end{align}
and choosing $d=2$ completes the proof.
\end{proof}

Note that Lemma \ref{Lem3} can also be applied to the case of $k=l$ and $i\neq j$ for general $d$ with the result
\begin{align}\nonumber
||\rho'-\rho_{\star}||_1 \geq \frac{|\lambda_i - \lambda_j|}{2} |q(kk|ij)|,
\end{align}
which is half the size of the bound of Lemma \ref{Lem2} in the case of $d=2$.

\section{Proof of Theorem \ref{th1}}
\label{secPROOF}
Theorem \ref{th1} is stated in the main text. We repeat it here for convenience
\setcounter{theorem}{0}
\begin{theorem}
If there exists a stochastic map, $P_Q(k|i)$ with
\begin{align}\nonumber
\sum_k P_Q(k|i) = 1 \quad \text{and} \quad P_Q(k|i) \geq 0 \; \forall i,k,
\end{align}
such that
\begin{align}
\mu^{n\prime}_{kl} = \delta_{kl} \sum_i P_Q(k|i) \mu^n_{ii},
\label{revconSI}
\end{align}
for all $n$, then the minimum thermodynamic cost of the operation
\begin{align}
\Delta E_Q \geq kT \ln 2 \left[ S(\rho_{\rm S}) - S(\rho'_{\rm S})\right]+\epsilon_Q,
\nonumber
\end{align}
can approach $\epsilon_Q=0$. Otherwise, provided there are no symmetries of the form $\lambda_i/\lambda_j=\lambda'_k/\lambda'_l$ where $i\neq j$ or $k\neq l$, then necessarily $\epsilon_Q>0$.
\end{theorem}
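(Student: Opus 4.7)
My plan is to combine three ingredients established in the earlier appendices: (a) Appendix \ref{NAFS} re-expresses any physical implementation of $Q$ in the form $\mu^{n\prime}_{kl}=\sum_{ij} q(kl|ij)\mu^n_{ij}$ with $q(kk|ii)\geq 0$ and $\sum_k q(kk|ij)=\delta_{ij}$; (b) Corollary \ref{C0} bounds the excess heat by $\epsilon_Q\geq \tfrac12 kT\|\rho'-\rho_\star\|_1^2$; and (c) Lemma \ref{Lem3} bounds $\|\rho'-\rho_\star\|_1\geq |\lambda_j\lambda'_k-\lambda_i\lambda'_l||q(kl|ij)|/(\lambda'_k+\lambda'_l)$. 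Composing (b) and (c) converts the theorem into a combinatorial claim about which $q$-coefficients are compatible with the target map.

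\textbf{Sufficiency.} Given a stochastic map $P_Q(k|i)$ satisfying (\ref{revconSI}), I would exhibit the ``diagonal'' assignment $q(kk|ii)=P_Q(k|i)$ and $q(kl|ij)=0$ whenever $i\neq j$ or $k\neq l$; direct substitution into (\ref{mapp}) recovers (\ref{revconSI}), while $\sum_k q(kk|ij)=\delta_{ij}$ follows from $\sum_k P_Q(k|i)=1$. For such coefficients every nonvanishing term of (c) has $\lambda_j\lambda'_k-\lambda_i\lambda'_l=0$, so no excess cost is forced. To realise the assignment physically, in the spirit of Appendix \ref{proto}, I would couple the system to an auxiliary via a CNOT-type unitary in the $\{|\phi_i\rangle\}$ basis so that the auxiliary records the diagonal label $i$, implement the classical channel $i\mapsto k$ on this now-classical record using the isothermal protocol of Appendix \ref{ARev}, coherently rotate the result into the $\{|\phi'_k\rangle\}$ basis, and reset the auxiliary. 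Because the average-state dynamics reduces to the classical stochastic map, whose reversible Landauer cost is $kT\ln 2\,[S(\rho_{\rm S})-S(\rho'_{\rm S})]$, this protocol achieves $\epsilon_Q\to 0$ in the appropriate quasistatic limit.

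\textbf{Necessity.} Assume (\ref{revconSI}) has no solution and that no equality $\lambda_i/\lambda_j=\lambda'_k/\lambda'_l$ with $i\neq j$ or $k\neq l$ holds. For any admissible implementation, if every $q(kl|ij)$ with $i\neq j$ or $k\neq l$ vanished, then by (\ref{mapp}) we would have $\mu^{n\prime}_{kl}=\delta_{kl}\sum_i q(kk|ii)\mu^n_{ii}$ for all $n$; setting $P_Q(k|i):=q(kk|ii)$ would give a stochastic map satisfying (\ref{revconSI}) (positivity and normalisation inherited from the $q$-properties), contradicting the hypothesis. Hence at least one $q(kl|ij)$ with $i\neq j$ or $k\neq l$ is nonzero, and for that quadruple the no-symmetry assumption yields $|\lambda_j\lambda'_k-\lambda_i\lambda'_l|>0$, so (c) gives $\|\rho'-\rho_\star\|_1>0$ and (b) gives $\epsilon_Q>0$. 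To upgrade this to a strictly positive infimum across all implementations, I would note that $q(kl|ij)=\tr[A_{ki}\rho_{\rm AB}A^\dagger_{lj}]$ with $A_{ij}$ bounded, so the admissible $q$-tuples form a compact set disjoint from the ``diagonal-only'' affine subspace; compactness plus continuity then supplies a uniform positive lower bound on $\max|q(kl|ij)|$ over the ``bad'' quadruples, and hence on $\epsilon_Q$.

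\textbf{Main obstacle.} The cleaner half is sufficiency, which reduces to executing an Appendix \ref{proto}-style protocol on a classical record. The delicate step is the final one of the necessity argument: promoting the pointwise nonvanishing of the lower bound into a uniform strictly positive lower bound on $\epsilon_Q$ across all admissible implementations. This is exactly the numerical optimisation over $q$-coefficients mentioned after (\ref{eqbound1}), and it is also the reason the resulting lower bound is in general only computable implementation-by-implementation rather than in closed form.
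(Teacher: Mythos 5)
Your proposal follows essentially the same route as the paper's proof: the rewriting of any implementation through the coefficients $q(kl|ij)$ (Appendix~\ref{NAFS}), the chain of Corollary~\ref{C0} and Lemma~\ref{Lem3}, the diagonal assignment $q(kk|ii)=P_Q(k|i)$ with $q(kl|ij)=0$ otherwise realised by the Appendix~\ref{proto} protocol for sufficiency, and for necessity the observation that a purely diagonal solution would itself define a stochastic map satisfying Equation~(\ref{revconSI}), so some off-diagonal $q(kl|ij)$ must survive and, absent the symmetry $\lambda_i/\lambda_j=\lambda'_k/\lambda'_l$, force $\|\rho'-\rho_\star\|_1>0$. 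Your closing compactness-and-continuity remark for turning the per-implementation bound into a uniform $\epsilon_Q>0$ is just a slightly more explicit rendering of the paper's minimisation of the bound of Equation~(\ref{lem2EQ}) over all admissible coefficients (together with its implementation-independent off-diagonal bound), so the substance coincides.
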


\begin{proof}
In general the output state of the quantum operation is given by (\ref{indiv})
\begin{align}
\mu^{n\prime}_{kl} = \sum_{ij} \mu^n_{ij}q(kl|ij).
\label{EQrewrite1}
\end{align}
By Corollary \ref{C0} the thermodynamic bound requires that $||\rho'-\rho_{\star}||_1 = 0$.
By Lemma \ref{Lem3} this requires that if $|\lambda_j\lambda'_k - \lambda_i\lambda'_l|$ is non zero then
\begin{align}
q(kl|ij) = 0.
\nonumber
\end{align}
It is helpful to separate Equation~(\ref{EQrewrite1}) into three terms:
\begin{align}
\mu^{n\prime}_{kl}=&\delta_{kl} \sum_i q(kk|ii) \mu^n_{ii} + \delta_{kl} \sum_{i,j\neq i} q(kk|ij) \mu^n_{ij}
\nonumber\\
&+ (1-\delta_{kl})\sum_{ij} q(kl|ij) \mu^n_{ij}.
\label{3split}
\end{align}
The first and second terms only deal with the diagonal elements of the output signal states in the basis of the average output density matrix.  The third term deals with the off-diagonal elements.

By Lemma \ref{Lem3}, the first sum implies only $|| \rho'-\rho_\star||_1\geq 0$, as $\lambda_i\lambda'_k - \lambda_i\lambda'_k=0$ for $q(kk|ii)$. If Equation ~(\ref{revconSI}) holds then $P_Q(k|i)=q(kk|ii)$ gives an implementation in which all other terms are zero, and there is no excess cost. The numbers $q(kk|ii)$ are real, non negative and satisfy
\begin{align}
\sum_k q(kk|ii) = \sum_k \tr [A_{ki}^{\dagger} A_{ki}\rho_{\rm AB}] = 1,
\nonumber
\end{align}
by Equation~(\ref{g5}).

If Equation ~(\ref{revconSI}) does not hold, then some values of $q(kl|ij)$ with $i\neq j$ or $k\neq l$ must be non-zero, in any implementation. For a given implementation, the largest value of the lower bound of Equation~(\ref{lem2EQ}) across all $ijkl$ determines an excess thermodynamic cost. If there are no symmetries of the form $\lambda_i/\lambda_j=\lambda'_k/\lambda'_l$ with $i\neq j$ or $k\neq l$, then this lower bound must be greater than zero. The minimum such value across all possible implementations gives a lower bound for $\epsilon_Q>0$.
\end{proof}

It should be noted that the existence of some values of $ijkl$ for which $\lambda_i/\lambda_j=\lambda'_k/\lambda'_l$ holds is not sufficient for $\epsilon_Q=0$.  To avoid an excess cost from our Theorem through the exceptional symmetry condition, it must be possible to implement the operation in such a way that $q(kl|ij)=0$ for every $ijkl$ value for which $\lambda_i/\lambda_j \neq \lambda'_k/\lambda'_l$.

When the output signal states do not share a common diagonalisation, we can give a general lower bound for $\epsilon_Q$.  The third sum in Equation~(\ref{3split}) is non-zero whenever there exists $k \neq l$ for which $\mu^{n\prime}_{kl}>0$. For these cases, from Equation~(\ref{EQrewrite1}), using the triangle inequality
\begin{align}
|\mu^{n\prime}_{kl}| \leq \sum_{ij} |q(kl|ij)|| \mu^n_{ij}|.
\nonumber
\end{align}
From Equation~(\ref{lem2EQ}) this implies
\begin{align}
|| \rho'-\rho_{\star}||_1  \geq
\max_{n,k,l\neq k}\left\{
\frac
{|\mu^{n\prime}_{kl}|/(\lambda'_k+\lambda'_l)}
{\sum_{ij} |\mu^n_{ij}|/|\lambda_j\lambda'_k - \lambda_i\lambda'_l|}
\right\}>0,
\nonumber
\end{align}
where the maximum is taken across all $k$ and $l$ such that $k\neq l$, and across all inputs $n$.
As the right hand side only involves terms independent of the specific implementation, then this bound must hold for all implementations and so by Corollary \ref{C0},
\begin{align}
\epsilon_Q  \geq \frac{1}{2}kT
\max_{n,k,l\neq k}\left\{
\frac
{|\mu^{n\prime}_{kl}|/(\lambda'_k+\lambda'_l)}
{\sum_{ij} |\mu^n_{ij}|/|\lambda_j\lambda'_k - \lambda_i\lambda'_l|}
\right\}^2. \nonumber
\end{align}
If all $\mu^{n\prime}_{kl}=0$, with $k \neq l$, but Equation~(\ref{revconSI}) does not hold, then
\begin{align}
\mu^{n\prime}_{kl}=\delta_{kl} \sum_i q(kk|ii) \mu^n_{ii} + \delta_{kl} \sum_{i,j\neq i} q(kk|ij) \mu^n_{ij},
\label{EQdiag}
\end{align}
and any implementation must have values of $|q(kk|ij)| > 0 $ with $i \neq j$. We can find complex coefficients $q(kk|ij)$, which satisfy Equation~(\ref{EQdiag}), while minimising the largest value of the lower bound of (\ref{lem2EQ}), $\frac{1}{2}|\lambda_i - \lambda_j| \left| q(kk|ij) \right|$. This must be done using numerical optimisation techniques on a case-by-case basis. Given the set $\{\{\lambda_i\},\{\lambda'_k\},\{\mu_{ij}^n\},\{\mu_{kl}^{n\prime}\}\}$, this establishes a minimum value for $|| \rho'-\rho_{\star}||_1$ and therefore, by Corollary \ref{C0}, a minimum excess cost across all possible implementations, so again $\epsilon_Q>0$.

\section{Thermodynamically reversible protocol for special cases}
\label{proto}

We now demonstrate a protocol which can achieve $\epsilon_Q=0$ for the special class of quantum operations which satisfy Equation~(\ref{revconSI}). For these operations there exists a stochastic map $P_Q(k|i)$ such that the input and output states of a quantum operation are related by $\mu^{n\prime}_{kl} = \delta_{kl} \sum_{i} P_Q(k|i) \mu^n_{ii}$. 

Let the initial state of the auxiliary be $\proj{0_{\rm A}}$, so the combined system and auxiliary is initially
\begin{align}
\rho_{\rm S}^n\otimes\rho_{\rm A}=\sum_{ij} \mu^n_{ij}\kb{\phi_i}{\phi_j}\otimes\proj{0_{\rm A}}.
\nonumber
\end{align}

{\it Step 1}. Use the auxiliary to measure the input system in the basis of the average input density matrix.  Correlate the auxiliary to the system using a unitary with $\ket{\phi_i}\ket{0_{\rm A}} \rightarrow \ket{\phi_i}\ket{i_{\rm A}}$.  The joint state is now
\begin{align}
\sum_{ij} \mu^n_{ij} \kb{\phi_i}{\phi_j}\otimes\kb{i_{\rm A}}{j_{\rm A}}.
\nonumber
\end{align} 

{\it Step 2}. The operation $T_{\proj{\phi_i}\rightarrow \rho^{(i)}_{\rm S}}$ (see Appendix \ref{proto}) performs a thermodynamically optimal conversion of $\proj{\phi_i}$ to $\rho^{(i)}_{\rm S}=\sum_k P_Q(k|i) \proj{\phi_k}$.  The conditional operation $\sum_i  T_{\proj{\phi_i} \rightarrow \rho^{(i)}_{\rm S}} \otimes \proj{i_{\rm A}} $ puts the joint system in the state
\begin{align}
\sum_{ki}P_Q(k|i)\mu^n_{ii} \proj{\phi_k} \otimes \proj{i_{\rm A}},
\nonumber
\end{align}
transferring mean heat to the heat bath $\Delta E_{1}=-k T \ln 2 \sum_{ni} p_n \mu^n_{ii} S(\rho^{(i)}_{\rm S})$. The average state takes the form
\begin{align}
\sum_{ki}P_Q(k|i) \lambda_i  \proj{\phi_k} \otimes \proj{i_{\rm A}}.
\nonumber
\end{align}

{\it Step 3}. We now exploit the correlations between system and auxiliary to optimally reset the auxiliary.  The operation $T_{\rho^{(k)}_{\rm A}\rightarrow \proj{0_{\rm A}}}$ performs an optimal conversion of $\rho^{(k)}_{\rm A}=\sum_{i} P_Q(k|i)\lambda_i/\left(\sum_j P_Q(k|j)\lambda_j\right) \proj{i_{\rm A}}$ to $\proj{0_{\rm A}}$, where $P_Q(k|i)\lambda_i/\left(\sum_j P_Q(k|j)\lambda_j\right)$ is the mean conditional probability of the auxiliary state being $|i_{\rm A}\rangle$ given that the system state is $|\phi_k\rangle$.  The conditional operation $\sum_k \proj{\phi_k} \otimes T_{\rho^{(k)}_{\rm A}\rightarrow \proj{0_{\rm A}}}$ resets the auxiliary to $\proj{0_{\rm A}}$ while transferring mean heat $\Delta E_{2}=k T \ln 2 \sum_{ki} P_Q(k|i)\lambda_i S(\rho^{(k)}_{\rm A})$.

The joint state is now
\begin{align}
\sum_{ki}P_Q(k|i)\mu^n_{ii} \proj{\phi_k} \otimes \proj{0_{\rm A}},
\nonumber
\end{align}
and provided the conditions of the theorem hold, the system is in the correct output state, up to a unitary.  The average heat production is readily confirmed to be $\Delta E_Q  =\Delta E_1 +\Delta E_2 = k T \ln 2 [S(\rho_{\rm S})-S(\rho_{\rm S}')]$ thus proving that $\epsilon_Q = 0$.  It is important to note that, if the conditions of the theorem do not apply, this protocol not only fails to be optimal, but fails to correctly implement the operation at all, i.e. $\rho_{\rm S}^n$ is not mapped to the correct $\rho^{n\prime}_{\rm S}$.

\section{Qubit dephasing example}
\label{ADP}

We analyse the specific example of a qubit dephasing operation. This can be implemented using a CNOT. For this operation an input of the form $\rho^n_{\rm S} = |v^n_{\rm S}\rangle\langle v^n_{\rm S}|$ with $|v^n_{\rm S}\rangle = c^n_0|0\rangle + c^n_1|1\rangle$ gives the output $\rho^{n\prime}_{\rm S} = |c^n_0|^2|0\rangle\langle 0 | +|c^n_1|^2|1\rangle\langle 1|$. All outputs are simultaneously diagonalisable in the $\{|0\rangle,|1\rangle\}$ basis. We choose two pure state inputs $\rho^1_{\rm S}$ and $\rho^2_{\rm S}$ such that the average input state is $\rho_{\rm S} = p\rho^1_{\rm S} +(1-p) \rho^2_{\rm S}$ for some probability $p$.

All of the elements $\mu^1_{ij},\mu^2_{ij},\mu^{1\prime}_{ij},\mu^{2\prime}_{ij}$ are specified by the operation together with the form and relative frequency of the inputs. To find $\mu^1_{ij},\mu^2_{ij}$ we determine the diagonal basis of $\rho_{\rm S}$ and express the individual inputs in this basis
\begin{align}
\rho^1_{\rm S} =& \mu^1_{11}|\phi_1\rangle\langle \phi_1|+\mu^1_{12}|\phi_1\rangle\langle \phi_2| \nonumber\\
&\quad+\mu^1_{21}|\phi_2\rangle\langle \phi_1|+\mu^1_{22}|\phi_2\rangle\langle \phi_2|, \nonumber \\
\rho^2_{\rm S} =& \mu^2_{11}|\phi_1\rangle\langle \phi_1|+\mu^2_{12}|\phi_1\rangle\langle \phi_2| \nonumber \\
&\quad+\mu^2_{21}|\phi_2\rangle\langle \phi_1|+\mu^2_{22}|\phi_2\rangle\langle \phi_2|.
\nonumber
\end{align}
The output elements are
\begin{align}
\mu^{n\prime}_{11} &= |c^n_0|^2, \nonumber\\
\mu^{n\prime}_{22} &= |c^n_1|^2, \nonumber\\
\mu^{n\prime}_{kl} & = 0 \text{ for } k\neq l.\nonumber
\end{align}
The elements must satisfy Equation~(\ref{EQrewrite1}) for some set of coefficients $q(kl|ij)$. The off-diagonal elements of the outputs can be solved by choosing $q(kl|ij) = 0$ for $k\neq l$, implying no constraint on excess heat cost by Lemma \ref{Lem3}. For the diagonal elements of the outputs we denote $q(kk|ii) = q_{ki}$, and $q(11|12) =w = -q(22|12)$, $q(11|21) =  w^* = -q(22|21)$, so that from Equation~(\ref{EQrewrite1}) we can write
\begin{align}
\mu^{1\prime}_{11} &= |c^1_0|^2 = \mu^1_{11} q_{11} + \mu^1_{12}w + \mu^1_{21}w^* + \mu^1_{22}q_{12}, \nonumber \\
\mu^{2\prime}_{11} &= |c^2_0|^2 = \mu^2_{11} q_{11} + \mu^2_{12}w + \mu^2_{21}w^* + \mu^2_{22}q_{12} , \nonumber
\end{align}
and
\begin{align}
\mu^{1\prime}_{22} &= |c^1_1|^2 = \mu^1_{11} q_{21} - \mu^1_{12}w - \mu^1_{21}w^* + \mu^1_{22}q_{22}, \nonumber\\
\mu^{2\prime}_{22} &= |c^2_1|^2 = \mu^2_{11} q_{21} - \mu^2_{12}w - \mu^2_{21}w^* + \mu^2_{22}q_{22} .
\nonumber
\end{align}
Since $|c^n_0|^2+|c^n_1|^2 = 1$, $\mu^n_{11}+\mu^n_{22} = 1$, and $q_{1k} +q_{2k}=1$, these pairs of equations are equivalent, so we focus on the first pair. The remaining numerical problem is to find the $w$ with the smallest magnitude which can solve this pair of equations with $0\leq q_{11}, q_{12} \leq 1$. This is done by scanning complex values of $w$ of increasing magnitude and solving for $q_{11}$ and $q_{12}$ until they both fit within the required range. Once we have found $|w|_{min}$ we can use Lemma \ref{Lem2} to bound  $||\rho'-\rho_{\star}||_1$:
\begin{align}
||\rho'-\rho_{\star}||_1\geq |\lambda_1-\lambda_2| |w|_{min},
\nonumber
\end{align}
where $\lambda_i$ are the eigenvalues of $\rho_{\rm S}$. Finally we can use Equation~(\ref{Eresult}) to bound the energy cost:
\begin{align}
\Delta E_{Q_0} \geq kT\ln 2\left[ S(\rho_{\rm S}) - S(\rho'_{\rm S})\right] + \frac{1}{2} kT (\lambda_1-\lambda_2)^2 |w|^2_{min}.
\nonumber
\end{align}
Examples are given in Figure \ref{F1}.


\end{document}